\documentclass[draft,envcountsame,a4paper]{llncs}
\usepackage{amsmath}
\usepackage{amsfonts}
\usepackage{amssymb}

\newcommand\comment[1]{}
\newcommand\cb{Co-B\"uchi\ }

\DeclareMathOperator{\nex}{Next}
\DeclareMathOperator{\succs}{Succ}

\newcommand{\start}{\mathsf{start}}
\newcommand{\init}{\mathsf{init}}
\newcommand{\inpu}{\mathsf{input}}
\newcommand{\fail}{\mathsf{fail}}
\newcommand{\x}{\mathcal{X}}

\newcommand{\until}{\ \mathcal{U}\ }
\newcommand{\nats}{\mathbb N}

\pagestyle{plain}

\begin{document}

\title{Synthesis of Succinct Systems}
\author{John Fearnley\inst{1} \and Doron Peled\inst{2} \and Sven Schewe\inst{1}}

\institute{Department of Computer Science, University of Liverpool, Liverpool,
UK \and	Department of Computer Science, Bar Ilan University, Ramat Gan 52900,
Israel }

\date{}
\maketitle

\begin{abstract}
Synthesis of correct by design systems from specification has recently attracted
much attention. The theoretical results imply that this problem is highly
intractable, e.g., synthesizing a system is 2EXPTIME-complete for an LTL
specification and EXPTIME-complete for CTL. An
argument in favor of synthesis is that the temporal specification is
highly compact, and the complexity reflects the large size of the system
constructed. A careful observation reveals that the size of the system is
presented in such arguments as the size of its state space. This view is a bit
biased, in the sense that the state space can be exponentially larger than the
size of a reasonable implementation such as a circuit or a program. Although
this alternative measure of the size of the synthesized system is more intuitive
(e.g., this is the standard way model checking problems are measured), research
on synthesis has so far stayed with measuring the system in terms of the
explicit state space. This raises the question of whether or not there always
exists a small system. In this paper, we show that this is the case if, and only
if, PSPACE = EXPTIME.
\end{abstract}

\section{Introduction}

Reactive synthesis is a research direction inspired by Church's problem
\cite{Church/63/Logic}. It focuses on systems that receive a constant stream of
inputs from an environment, and must, for each input, produce some output.
Specifically, we are given a logical specification which dictates how the system
must react to the inputs, and we must construct a
system that satisfies the specification for all possible inputs that the
environment could provide.

While the verification~\cite{Holzmann/97/SPIN,DBLP:conf/cav/CimattiCGGPRST02}
(the validation or refutation of the correctness of such a system) has gained
many algorithmic solutions and various successful tools, the synthesis
problem~\cite{ehl11tacas,DBLP:conf/cav/FiliotJR09,DBLP:conf/vmcai/PitermanPS06}
has had fewer results. One of the main problems is the complexity of the
synthesis problem. A classical result by Pnueli and Rosner~\cite{PR89} shows
that synthesis of a system from an LTL specification is 2EXPTIME-complete. It
was later shown by Kupferman and Vardi that synthesis for CTL specifications is
EXPTIME-complete~\cite{KV97}. A counter argument against the claim that
synthesis has prohibitive high complexity is that the size of the system
produced by the synthesis procedure is typically large. Some concrete
examples~\cite{Kupferman+Vardi/95/Formula} show that the size of the system
synthesized may need to be doubly exponentially larger than the LTL
specification. This, in fact, shows that LTL specifications are a very compact
representation of a system, rather than simply a formalism that is intrinsically
hard for synthesis. 

As we are interested in the relationship between the specification and the
synthesized system, a question arises with respect to the nature of the system
representation.  The classical synthesis problem regards the system as a
\emph{transition system} with an \emph{explicit} state space, and the size of
this system is the number of states and transitions. This is, to some extent, a
biased measure, as other system representations, such as programs or circuits
with memory, often have a much more concise representation: it is often possible
to produce a circuit or program that is exponentially smaller than the
corresponding transition system. For example, it is easy to produce a small
program that implements an $n$ bit binary counter, but a corresponding
transition system requires~$2^n$ distinct states to implement the same counter.
Thus, we ask the question of {\em what is the size of the minimal system
representation in terms of the specification?}

We look at specifications given in CTL, LTL, or as an
automaton, and study the relative synthesized system complexity. We
choose to represent our systems as online Turing machines with a bounded storage
tape. This is because there exist straightforward translations between online
Turing machines and the natural representations of a system, such as programs
and circuits, with comparable representation size. The online Turing machine
uses a read-only input tape to read the next input, a
write-only output tape to write the corresponding output, and
its storage tape to serve as the memory required to compute the corresponding
output for the current input.

The binary-counter example mentioned above showed that there are instances in
which an online Turing machine model of the specification is exponentially
smaller than a transition system model of that formula. In this paper we ask: is
this always the case? More precisely, for every 
CTL formula $\phi$, does there always exist an online Turing machine
$\mathcal{M}$ that models $\phi$, where the amount of space required to describe
$\mathcal{M}$ is polynomial in $\phi$? We call machines with this property
\emph{small}. Our answer to this problem is the following:
\begin{quote}
Every 
CTL formula has a small 
online Turing machine model (or no model at all) if, and only if, PSPACE =
EXPTIME.
\end{quote}

This result can be read in two ways. One point of view is that, since PSPACE is
widely believed to be a proper subset of EXPTIME, the ``if'' direction of our
result implies that it is unlikely that every CTL formula has a small online
Turing machine model. However, there is an opposing point of view. It is widely
believed that finding a proof that PSPACE $\ne$ EXPTIME is an extremely
difficult problem. The ``only if'' direction of our result implies that, if we
can find a family of CTL formulas that provably requires super-polynomial sized
online Turing machine models, then we have provided a proof that PSPACE $\ne$
EXPTIME. If it is difficult to find a proof that PSPACE $\ne$ EXPTIME, then it
must also be difficult to find such CTL formulas. This indicates that most CTL
formulas, particularly those that are likely to arise in practice, may indeed
have small online Turing machine models. 

Using an online Turing machine raises the issue of the time needed to respond to
an input. In principle, a polynomially-sized online Turing machine can take
exponential time to respond to each input. A small model may, therefore, take
exponential time in the size of the CTL formula to produce each output. This
leads to the second question that we address in this paper: for CTL, does there
always exist a small online Turing machine model that is \emph{fast}? The model
is fast if it always responds to each input in polynomial time. Again, our
result is to link this question to an open problem in complexity theory:
\begin{quote}
Every CTL formula 
has a small and fast (or no) online Turing machine model if, and only if, EXPTIME $\subseteq$ P/poly.
\end{quote}
P/poly is the class of problems solvable by a polynomial-time Turing machine
with an advice function that provides advice strings of polynomial size. It has
been shown that if EXPTIME $\subseteq$ P/poly, then the polynomial time
hierarchy collapses to the second level, and that EXPTIME itself
would be contained in the polynomial time hierarchy~\cite{KL82}.

Once again, this result can be read in two ways. Since many people believe that
the polynomial hierarchy is strict, the ``if'' direction of our result implies
that it is unlikely that all CTL formulas have small and fast models. On the
other hand, the ``only if'' direction of the proof implies that finding a family
of CTL formulas that do not have small and fast online Turing machine models is
as hard as proving that EXPTIME is not contained in P/poly. As before, if
finding a proof that EXPTIME $\nsubseteq$ P/poly is a difficult problem, then
finding CTL formulas that do not have small and fast models must also be a
difficult problem. This indicates that the CTL formulas that arise in practice
may indeed have small and fast models.

We also replicate these results for specifications given by \cb automata, which
then allows us to give results for LTL specifications.

\section{Preliminaries}

\subsection{CTL Formulas}

Given a finite set $\Pi$ of atomic propositions, the syntax of a CTL formula is
defined as follows:
\begin{align*}
\phi &::= p \; | \; \lnot \phi \; | \; \phi \lor \phi \; | \; A \psi \; | \; E
\psi, \\
\psi &::= \x \phi \; | \; \phi  \until  \phi,
\end{align*}
where $p \in \Pi$. For each CTL formula $\phi$ we define $|\phi|$ to give the
size of the parse tree for that formula.

Let $T = (V, E)$ be an infinite directed tree, with all edges pointing away from
the root. Let $l : V \rightarrow 2^\Pi$ be a labelling function. The semantics
of CTL are defined as follows. For each $v \in V$ we have:
\begin{itemize}
\item $v \models p$ if and only if $p \in l(v)$.
\item $v \models \lnot \phi$ if and only if $v \not\models \phi$.
\item $v \models \phi \lor \psi$ if and only if either $v \models \phi$ or $v
\models \psi$.
\item $v \models A \psi$ if and only if for all paths $\pi$ starting at $v$ we
have $\pi \models \psi$.
\item $v \models E \psi$ if and only if there exists a path $\pi$ starting at
$v$ with $\pi \models \psi$.
\end{itemize}
Let $\pi = v_1, v_2, \dots$ be an infinite path in $T$. We have:
\begin{itemize}
\item $\pi \models \x \psi$ if and only if $v_2 \models \psi$.
\item $\pi \models \phi  \until  \psi$ if and only if there exists $i \in
\nats$ such that $v_i \models \psi$ and for all $j$ in the range $1 \le j < i$
 we have $v_j \models \phi$.
\end{itemize}
The pair $(T, l)$, where $T$ is a tree and $l$ is a labelling function, is a
model of $\phi$ if and only if $r \models \phi$, where $r \in V$ is the root of
the tree. If $(T, l)$ is a model of $\phi$, then we write $T, l \models \phi$.

\subsection{Mealy Machines}

The synthesis problem is to construct a model that satisfies the given
specification. We will give two possible formulations for a model.
Traditionally, the synthesis problem asks us to construct a model in the form of
a transition system. We will represent these transition systems as \emph{Mealy
machines}, which we will define in this section. In a later section we will give
online Turing machines as an alternative, and potentially more succinct, model
of a specification.

A Mealy machine is a tuple $\mathcal{T} = (S, \Sigma_I, \Sigma_O, \tau, l,
\start, \inpu)$. The set~$S$ is a finite set of states, and the state $\start
\in S$ is the starting state. The set $\Sigma_I$ gives an input alphabet, and
the set $\Sigma_O$ gives an output alphabet. The transition function $\tau : S
\times \Sigma_I \rightarrow S$  gives, for each state and input letter, an
outgoing transition. The function $l: S \times \Sigma_I \rightarrow \Sigma_O$ is
a labelling function, which assigns an output letter for each transition. The
letter $\inpu \in \Sigma_I$ gives an initial input letter for the machine.

Suppose that~$\phi$ is a CTL formula that uses $\Pi_I$ as a set \emph{input}
propositions, and $\Pi_O$ as a set of \emph{output} propositions. Let
$\mathcal{T} = (S, 2^{\Pi_I}, 2^{\Pi_O}, \tau, l, \start, \inpu)$ be a Mealy
machine that uses sets of these propositions as input and output alphabets. A
sequence of states $\pi = s_0, s_1, s_2, \dots$ is an infinite path in
$\mathcal{T}$ if $s_0 = \start$, and if, for each $i$, there is a letter
$\sigma_i \in \Sigma_I$ such that $\tau(s_i, \sigma_i) = s_{i+1}$. We define
$\omega_i = \sigma_i \cup l(s_{i+1})$, where we take $\sigma_0 = \inpu$, to be
the set of input and output propositions at position~$i$ in the path. Then, for
each infinite path~$\pi$, we define the word $\sigma(\pi) = \omega_0, \omega_1,
\omega_2 \dots$ to give the sequence of inputs and outputs along
the path $\pi$. Furthermore, let $(T, l)$ be the infinite tree corresponding to
the set of words $\sigma(\pi)$, over all possible infinite paths $\pi$. We say
that $\mathcal{T}$ is a model of~$\phi$ if $T, l \models \phi$. Given a CTL
formula~$\phi$ and a Mealy machine $\mathcal{T}$, the CTL model checking problem
is to decide whether $\mathcal{T}$ is a model of $\phi$.

\begin{theorem}[\cite{Kupferman+Vardi+Wolper/00/CTL}]
\label{thm:ctllogspace}
Given a Mealy machine $\mathcal{T}$ and an CTL-formula $\phi$, the CTL model
checking problem can be solved in space polynomial in $|\phi| \cdot
\log|\mathcal{T}|$.
\end{theorem}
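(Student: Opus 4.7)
The plan is to give a recursive decision procedure $\mathsf{check}(s,\psi)$ that, given a state $s$ of $\mathcal{T}$ and a subformula $\psi$ of $\phi$, decides whether $s \models \psi$ in the tree unwinding of $\mathcal{T}$, without ever materialising the set of states labelled by $\psi$. The recursion descends the parse tree of $\phi$: atomic propositions are settled by inspecting $l$ at the incoming transition of $s$; Boolean connectives recurse directly on their operands; and $E\x\phi'$ and $A\x\phi'$ are handled by iterating over the input letters $\sigma\in\Sigma_I$, computing $\tau(s,\sigma)$, and performing one recursive call per successor. Each activation frame stores only a state identifier, a subformula pointer, and a constant number of counters, all of bit-length $O(\log(|\mathcal{T}|+|\phi|))$.

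The essential cases are the until-modalities. For $E(\phi_1\until\phi_2)$ I would exploit the characterisation that $s\models E(\phi_1\until\phi_2)$ if and only if there is a walk $s=s_0,s_1,\ldots,s_k$ of length $k\le|S|$ in the transition graph of $\mathcal{T}$ with $s_i\models\phi_1$ for $i<k$ and $s_k\models\phi_2$. This is a reachability problem in a virtual subgraph whose vertex and edge predicates are supplied by recursive invocations of $\mathsf{check}$, and can therefore be resolved by a nondeterministic walk that uses $O(\log|\mathcal{T}|)$ space on top of the space consumed by its sub-queries. Savitch's theorem then turns this into a deterministic $O(\log^2|\mathcal{T}|)$ budget for the reachability layer itself. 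The modality $A(\phi_1\until\phi_2)$ is handled via its standard dual: $s\not\models A(\phi_1\until\phi_2)$ iff there is a walk from $s$ that avoids $\phi_2$ and either reaches a $\neg\phi_1$-state or enters a cycle. Cycle-detection and reachability are both in $\mathrm{NL}$ (and, by Immerman--Szelepcs\'enyi, in $\mathrm{coNL}$), and hence in the same space budget.

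To bound total space, observe that the recursion stack tracks at most one frame per node on a root-to-leaf path in the parse tree of $\phi$, so its depth is at most $|\phi|$. Each frame contributes $\mathrm{poly}(\log|\mathcal{T}|)$ space, including the Savitch-converted reachability subroutine it invokes, and the subroutine's working tape can be re-initialised and re-used across sibling calls rather than being added to the stack. The resulting deterministic space bound is $|\phi|\cdot\mathrm{poly}(\log|\mathcal{T}|)$, which is polynomial in $|\phi|\cdot\log|\mathcal{T}|$ as claimed.

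The step I expect to require the most care is the universal-until case: one must verify that the ``bad path'' characterisation is faithfully captured by a single polylogarithmic-space graph property on the finite state graph of $\mathcal{T}$, and that using $\mathsf{check}$ as an oracle for vertex predicates inside a (co-)nondeterministic reachability or cycle-detection procedure does not cause re-entrancy of working tape in a way that compounds the space bound across sibling oracle calls. A disciplined tape layout, together with the standard observation that oracle calls in space-bounded computations may reuse the query tape, should suffice to avoid this pitfall.
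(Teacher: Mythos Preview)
The paper does not prove this theorem; it is stated with a citation to Kupferman, Vardi, and Wolper and used as a black box. There is therefore no ``paper's own proof'' to compare against.

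That said, your proposal is the standard argument underlying the cited result and is correct in outline: a recursive procedure $\mathsf{check}(s,\psi)$ that descends the parse tree of $\phi$, handling the until modalities by (co-)nondeterministic reachability/cycle detection over the transition graph of $\mathcal{T}$ with $\mathsf{check}$ supplying the vertex predicates, and then applying Savitch. Your space accounting---formula depth $O(|\phi|)$, with each temporal layer contributing $O(\log^2|\mathcal{T}|)$ for the Savitch-converted graph search---yields $O(|\phi|\cdot\log^2|\mathcal{T}|)$, which is indeed polynomial in $|\phi|\cdot\log|\mathcal{T}|$.

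Two small points you should tighten in a full write-up. First, your characterisation of $s\not\models A(\phi_1\until\phi_2)$ is right but stated loosely: the witnessing path must avoid $\phi_2$ throughout the relevant prefix, and either reach a state with $\lnot\phi_1\land\lnot\phi_2$ or reach a $\lnot\phi_2$-cycle; you have this implicitly via ``avoids $\phi_2$'' but it is worth being explicit. Second, the interaction between the Savitch recursion and the formula recursion means the actual stack depth is $O(|\phi|\cdot\log|\mathcal{T}|)$ rather than $O(|\phi|)$; your phrasing ``each frame contributes $\mathrm{poly}(\log|\mathcal{T}|)$ space, including the Savitch-converted reachability subroutine'' absorbs this correctly into the per-formula-node budget, but a reader might miss that the Savitch stack is itself $O(\log|\mathcal{T}|)$ frames deep at each formula level. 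Neither issue affects the final bound.
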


Given a CTL formula $\phi$, the \emph{CTL synthesis problem} is to decide
whether there exists a Mealy machine that is a model of~$\phi$. This problem is
known to be EXPTIME-complete.

\begin{theorem}[\cite{KV97}] 
\label{thm:realctl}
The CTL synthesis problem is EXPTIME-complete.
\end{theorem}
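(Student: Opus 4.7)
The plan is to prove the two directions separately: membership in EXPTIME via the standard automata-theoretic route, and EXPTIME-hardness by reducing from acceptance of alternating polynomial-space Turing machines.

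For the upper bound, I would encode the realisable tree models of $\phi$ as an alternating tree automaton $\mathcal{A}_\phi$ of size $O(|\phi|)$ that runs over $2^{\Pi_I}$-branching trees whose nodes carry output labels from $2^{\Pi_O}$: the $2^{\Pi_I}$ children of each node are indexed by the possible environment inputs and the label of a child records the output that the system commits to for that input. The transition structure of $\mathcal{A}_\phi$ is the standard symbolic one: states are subformulas, Boolean connectives decompose into local conjunctions and disjunctions, $A\x$ and $E\x$ dispatch universally or existentially over the $2^{\Pi_I}$ children, and $A(\cdot \until \cdot)$, $E(\cdot \until \cdot)$ are captured by a co-B\"uchi or parity condition forbidding indefinitely postponed until obligations. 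I would then apply a Miyano--Hayashi or Safra-style construction to obtain an equivalent nondeterministic parity tree automaton $\mathcal{N}_\phi$ of size $2^{O(|\phi|)}$, whose emptiness is decidable in time polynomial in $|\mathcal{N}_\phi|$ and hence in EXPTIME overall. Because the tree shape is already Mealy-like, a regular witness tree for $\mathcal{N}_\phi$ directly yields a Mealy-machine model of $\phi$.

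For the lower bound, I would reduce from acceptance of an alternating Turing machine $M$ on an input $w$ of length $n$ using polynomial space $s(n)$, which is EXPTIME-complete. The target is a CTL formula $\phi_{M,w}$ of size polynomial in $|M|+n$ whose Mealy models are exactly systems that play out accepting computation trees of $M$ on $w$. The environment's inputs supply (i) a binary address of a tape cell (only $O(\log s(n))$ propositions are needed, since the tape is polynomial) and (ii) a branching choice whenever $M$ is in a universal state; the system's outputs return, for the queried cell, its current contents, together with the head's location and state and, at existential configurations, the transition being taken. The $A/E$ alternation of CTL mirrors the universal/existential alternation of $M$. Subformulas of $\phi_{M,w}$ enforce correct initialisation to $w$, local consistency between successive configurations, and eventual entry into an accepting state.

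The main obstacle will be the consistency step in the hardness reduction: the content of cell $i$ in configuration $k{+}1$ depends on cells $i{-}1, i, i{+}1$ of configuration $k$ together with the head's position, yet the formula cannot materialise an entire configuration as a single state because the tape is exponentially large compared to $|\phi_{M,w}|$. The standard device is to let the environment pick the address, force the system to expose the four relevant cells via successive $\x$ steps, and reject any mismatch through an $A\x$ subformula; soundness then follows from an inductive argument that systems passing every such spot check genuinely encode an accepting computation of $M$. Once this encoding is in place, the upper-bound ingredients are essentially off-the-shelf.
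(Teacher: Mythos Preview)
The paper does not prove this theorem at all: it is stated with a citation to Kupferman and Vardi and used as a black box, so there is no in-paper argument to compare against. Your upper-bound sketch via an alternating tree automaton of size $O(|\phi|)$, followed by an exponential blow-up to a nondeterministic tree automaton and polynomial-time emptiness, is the standard route and is fine.

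Your lower bound, however, contains a genuine confusion. You reduce from acceptance of an alternating machine $M$ that uses space $s(n)$ polynomial in $n$, yet you then write that ``the tape is exponentially large compared to $|\phi_{M,w}|$'' and design an address-querying spot-check mechanism to cope with that. This is wrong: the tape has $s(n)$ cells, which is polynomial in $n$, so one can simply introduce $O(s(n))$ output propositions that encode the entire current configuration explicitly, and write a polynomial-size conjunction that checks initialisation, step-by-step local consistency of every cell, and eventual acceptance. No cell-addressing or spot-checking is needed; that device is appropriate for settings (e.g.\ LTL realisability hardness from EXPSPACE) where the tape truly is exponential in the formula. Your indirect encoding may be salvageable, but it is unnecessarily delicate and the stated motivation for it is incorrect.

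A much shorter route to hardness is also available: CTL \emph{satisfiability} is already EXPTIME-hard, and satisfiability reduces trivially to synthesis by taking $\Pi_I=\emptyset$ (so the environment is vacuous and a Mealy model is just a labelled tree). That immediately gives the lower bound without any Turing-machine encoding.
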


\subsection{Tree Automata}

Universal \cb tree automata will play a fundamental role in the proofs given in
subsequent sections, because we will translate each CTL formula~$\phi$
into a universal \cb tree automaton $\mathcal{U(\phi)}$. The automaton will
accept Mealy machines, and the language of the tree automaton will be
exactly the set of models accepted by~$\phi$. We will then use these automata to
obtain our main results.

A \emph{universal \cb tree automaton} is $\mathcal{A} = ( S, \Sigma_I, \Sigma_O
, \start, \delta, F)$, where~$S$ denotes a finite set of states, $\Sigma_I$ is a
finite input alphabet, $\Sigma_O$ is a finite output alphabet, $\start \in S$ is
an initial state, $\delta$ is a transition function, $F\subseteq S$ is a set of
final states. The transition function $\delta: S \times \Sigma_O \rightarrow
2^{S \times \Sigma_I}$ maps a state and an output letter to a set of pairs,
where each pair consists of a successor state and an input letter.

The automaton accepts Mealy machines that use $\Sigma_I$ and
$\Sigma_O$ as their input and output alphabets, and the acceptance mechanism is
defined in terms of run graphs. We define a \emph{run graph} of a universal \cb tree
automaton $\mathcal{A} = (S_\mathcal{A}, \Sigma_I, \Sigma_O ,
\start_\mathcal{A}, \delta, F_\mathcal{A})$ on a Mealy machine
$\mathcal{T} = (S_\mathcal{T}, \Sigma_I, \Sigma_O,
\tau, l_\mathcal{T}, \start_\mathcal{T})$ to be a minimal directed graph $G =
(V,E)$ that satisfies the following constraints:
\begin{itemize}
\item The vertices of $G$ satisfy $V \subseteq S_\mathcal{A} \times
S_\mathcal{T}$. 
\item The pair of initial states $(\start_\mathcal{A},\start_\mathcal{T})$
is contained in $V$.
\item Suppose that for a vertex $(q, t) \in V$, we have that $(q', \sigma_I) \in
\delta(q, l_\mathcal{T}(\sigma_I, t))$ for some input letter $\sigma_I$. An edge
from $(q, t)$ to $(q', \tau(t, \sigma_I))$ must be contained in $E$.
\end{itemize}
A run graph is \emph{accepting} if every infinite path $v_1, v_2, v_3, \dots \in
V^\omega$ contains only finitely many states in~$F_\mathcal{A}$. A Mealy machine
$\mathcal{T}$ is accepted by $\mathcal{A}$ if it has an accepting run graph. The
set of Mealy machines accepted by $\mathcal{A}$ is called its \emph{language},
and is denoted by $\mathcal{L}(\mathcal{A})$. The automaton is empty if, and
only if, its language is empty.

A universal \cb tree automaton is called a \emph{safety} tree automaton if $F =
\emptyset$. Therefore, for safety automata, we have that every run graph is
accepting, and we drop the $F=\emptyset$ from the tuple defining the automaton.
A universal \cb tree automaton is \emph{deterministic} if $|\delta(s,
\sigma_O))| = 1$, for all states~$s$, and output letters~$\sigma_O$.


\subsection{Online Turing Machines}

We use \emph{online} Turing machines as a formalisation of a concise model. An
online Turing machine has three tapes: an infinite input tape, an infinite
output tape, and a storage tape of bounded size. The input tape is read only,
and the output tape is write only. Each time that a symbol is read from the
input tape, the machine may spend time performing computation on the storage
tape, before eventually writing a symbol to the output tape. 

We can now define the synthesis problem for online Turing machines. Let $\phi$
be a CTL formula defined using $\Pi_I$ and $\Pi_O$, as the sets of input, and
output, propositions, respectively. We consider online Turing machines that use
$2^{\Pi_I}$ as the input tape alphabet, and $2^{\Pi_O}$ as the output alphabet.
Online Turing machines are required, after receiving an input symbol, to produce
an output before the next input symbol can be read. Therefore, if we consider
the set of all possible input words that could be placed on the input tape, then
the set of possible outputs made by the online Turing machine forms a tree. If
this tree is a model of $\phi$, then we say that the online Turing machine is a
model of $\phi$.

Given a CTL formula $\phi$, we say that an online Turing machine $\mathcal{M}$
is a \emph{small} model of $\phi$ if:
\begin{itemize}
\item the storage tape of $\mathcal{M}$ has length polynomial in $|\phi|$, and
\item the discrete control (i.e.\ the action table) of $\mathcal{M}$ has size
polynomial in $|\phi|$.
\end{itemize}
Note that a small online Turing machine may take an exponential number of
steps to produce an output for a given input. We say that an online Turing
machine is a \emph{fast} model of $\phi$ if, for all inputs, it always responds
to each input in time polynomial in $|\phi|$.

\section{Small Models Imply PSPACE = EXPTIME}
\label{sec:small2pe}

Let~$\phi$ be a CTL formula that has a model. In this section we show that, if
there is always a small online Turing machine that models~$\phi$, then PSPACE =
EXPTIME. Our approach is to guess a polynomially sized online Turing machine
$\mathcal{M}$, and then to use model checking to verify whether $\mathcal{M}$ is
a model of $\phi$. Since our assumption guarantees that we only need to guess
polynomially sized online Turing machines, this gives a NPSPACE = PSPACE
algorithm for solving the CTL synthesis problem. Our proof then follows from the
fact that CTL synthesis is EXPTIME-complete.

To begin, we show how model checking can be applied to an online Turing machine.
To do this, we first unravel the online Turing machine to a Mealy machine.

\begin{lemma}
\label{lem:tm}
For each CTL formula $\phi$, and each online Turing machine $\mathcal{M}$ that
is a model of $\phi$, there exists a Mealy machine $\mathcal{T}(\mathcal{M})$
such that $\mathcal{T}(\mathcal{M})$ is a model of $\phi$.
\end{lemma}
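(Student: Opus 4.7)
\emph{Proof proposal.} The plan is to unravel $\mathcal{M}$ into a (possibly exponentially larger) Mealy machine whose states are the configurations of $\mathcal{M}$ taken at the moments immediately before an input symbol is read.

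A configuration of $\mathcal{M}$ is a triple consisting of the content of the storage tape, the position of the storage-tape head, and the internal control state. Since the storage tape has bounded length and the action table is finite, the total number of configurations is finite. Call a configuration \emph{input-ready} if $\mathcal{M}$ is in it precisely at the moment when it is about to read the next input symbol. I would take $S$ to be the set of input-ready configurations reachable from the initial configuration, $\start$ to be the initial configuration of $\mathcal{M}$ before reading any input, and $\inpu$ to be the designated initial input symbol of $\mathcal{M}$.

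For each $c \in S$ and each $\sigma_I \in 2^{\Pi_I}$, I would simulate $\mathcal{M}$ from $c$ on input $\sigma_I$. Because $\mathcal{M}$ is a model of $\phi$, its output tree is defined, so $\mathcal{M}$ must eventually write exactly one output letter $\sigma_O \in 2^{\Pi_O}$ and return to some new input-ready configuration $c'$; I then set $\tau(c, \sigma_I) = c'$ and $l(c, \sigma_I) = \sigma_O$. A direct induction on the length of the input prefix shows that, for every infinite input sequence, the sequence of outputs produced by $\mathcal{T}(\mathcal{M})$ agrees step-by-step with that produced by $\mathcal{M}$. Hence the infinite labelled trees induced by the two machines coincide, and $\mathcal{T}(\mathcal{M}) \models \phi$ follows from $\mathcal{M} \models \phi$.

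The only delicate point---and the ``obstacle,'' such as it is---is that $\tau$ and $l$ must actually be defined on all of $S \times 2^{\Pi_I}$: the simulation of $\mathcal{M}$ must halt between two consecutive input reads and must emit exactly one output symbol during that interval. Both properties are built into the definition of an online Turing machine together with the hypothesis that $\mathcal{M}$ models $\phi$, since the tree semantics for $\mathcal{M}$ presupposes a well-defined, terminating response to every input. Once this is observed, the construction is routine; the statement places no size requirement on $\mathcal{T}(\mathcal{M})$, and indeed this unraveling can be exponentially larger than $\mathcal{M}$ (which is the whole point of the succinct representation).
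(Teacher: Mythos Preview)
Your approach is essentially the paper's: unravel $\mathcal{M}$ into a Mealy machine whose states are configurations at input-reading moments, and let each transition simulate the intervening computation. The one genuine gap is in the ``delicate point'' you flag and then dismiss. Your inference that $\mathcal{M}$ must terminate between consecutive input reads because it models $\phi$ is not sound under the paper's semantics. An online Turing machine is explicitly permitted to run forever while producing only finitely many outputs; in that case the output word is a finite sequence followed by infinitely many blanks ($\emptyset$), and the resulting tree is still perfectly well defined. Nothing prevents such a tree from satisfying $\phi$. So ``$\mathcal{M}$ models $\phi$'' does not by itself guarantee that the simulation from every reachable input-ready configuration $c$ on every $\sigma_I$ eventually returns to an input-ready configuration.

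The paper handles this by adjoining a distinguished sink state $\fail$ with a self-loop labelled $\emptyset$. When simulating $\mathcal{M}$ from $c$ on $\sigma_I$, if no next input-ready configuration is ever reached (which is detectable, since the configuration space is finite), the transition goes to $\fail$; the output label is the last symbol written to the output tape during that run, or $\emptyset$ if none was written. With this addition the Mealy machine's tree coincides with $\mathcal{M}$'s tree even along non-terminating branches, and the equivalence goes through. Your construction is correct once you patch in this case; only the justification you gave for ignoring it is wrong.
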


The size of $\mathcal{T(M)}$ will be exponential in the size of
$\mathcal{M}$, because the number of storage tape configurations of
$\mathcal{M}$ grows exponentially with the length of the tape. However, this is
not a problem because there exists a deterministic Turing machine that outputs
$\mathcal{T(M)}$, while using only $O(|\mathcal{M}|)$ space.

\begin{lemma}
\label{lem:logspace}
There is a deterministic Turing machine that outputs $\mathcal{T(M)}$, while
using $O(|\mathcal{M}|)$ space. 
\end{lemma}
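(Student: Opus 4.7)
The plan is to generate $\mathcal{T}(\mathcal{M})$ on the fly, one transition at a time, using only enough workspace to hold a single configuration of $\mathcal{M}$. The key observation is that the state set of $\mathcal{T}(\mathcal{M})$, as constructed in the proof of Lemma \ref{lem:tm}, consists of the internal configurations of $\mathcal{M}$ between consecutive input-read events, that is, tuples recording the storage-tape contents, the head position on the storage tape, and the current control state. Each such configuration admits an encoding of length $O(|\mathcal{M}|)$, even though there are $2^{O(|\mathcal{M}|)}$ of them in total.

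Given this, I would describe an enumerator that maintains a single configuration-sized counter $c$, and iterates $c$ lexicographically through all encodings of configurations. For each $c$ and for each input letter $\sigma \in 2^{\Pi_I}$, the enumerator overwrites a scratch area with the configuration encoded by $c$, and simulates $\mathcal{M}$ step by step starting from that configuration with $\sigma$ available on the input tape, continuing until $\mathcal{M}$ has written the next output letter $\omega$ and entered the next input-reading state, at which point the resulting configuration encoding $c'$ is read off the scratch area. The tuple $(c, \sigma, \omega, c')$ is then written to the output tape as an entry of $\tau$ and $l$. Since the scratch area, the counter, and the currently-simulated configuration of $\mathcal{M}$ all fit in $O(|\mathcal{M}|)$ cells, and since reading $c$ back out of memory does not require additional storage beyond this, the space bound follows immediately.

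The one subtlety I expect to have to handle is the treatment of configurations from which $\mathcal{M}$ fails to ever write an output; such configurations are necessarily unreachable from $\mathsf{start}$ when $\mathcal{M}$ is a model of $\phi$, but we still have to decide what to print for them so as not to violate totality of $\tau$. The clean way to sidestep this is to equip the simulator with a step counter of length $O(|\mathcal{M}|)$ bits, representing a budget of $2^{O(|\mathcal{M}|)}$ steps (which strictly exceeds the number of distinct configurations); if the budget expires before an output is produced, the enumerator emits a self-looping transition with an arbitrary output letter. This keeps the workspace within $O(|\mathcal{M}|)$ and does not affect the tree generated from $\mathsf{start}$, so $\mathcal{T}(\mathcal{M})$ remains a model of $\phi$ in the sense of Lemma \ref{lem:tm}. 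I would similarly not attempt to restrict the output to reachable configurations, since carrying out reachability on the exponentially many states would force exponential workspace; enumerating all configurations is both simpler and sufficient, because the unreachable part of $\mathcal{T}(\mathcal{M})$ is invisible in its tree semantics.
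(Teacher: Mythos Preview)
Your proposal is essentially the same approach as the paper's: enumerate all configurations (each of size $O(|\mathcal{M}|)$), and for each configuration and input letter, simulate $\mathcal{M}$ step by step until the next input-read event, using an $O(|\mathcal{M}|)$-bit step counter to detect non-termination.

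One small caveat: your claim that non-terminating configurations are ``necessarily unreachable from $\mathsf{start}$ when $\mathcal{M}$ is a model of $\phi$'' is not justified by the paper's semantics. An online Turing machine may legitimately produce only finitely many outputs and then loop forever; the paper's construction of $\mathcal{T}(\mathcal{M})$ handles this by routing such transitions to a dedicated $\fail$ state that thereafter emits the blank symbol $\emptyset$, matching the infinite tail of blanks in the machine's output semantics. Your self-loop with an \emph{arbitrary} output letter could therefore alter the tree on a reachable branch. The fix is trivial---emit a transition to a $\fail$ sink with output $\emptyset$ rather than a self-loop---and costs no additional space, so the argument goes through unchanged.
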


Since the model checking procedure given in Theorem~\ref{thm:ctllogspace} uses
poly-logarithmic space, when it is applied to $\mathcal{T(M)}$, it will use
space polynomial in $|\mathcal{M}|$. Now, using standard techniques to compose
space bounded Turing machines (see~\cite[Proposition 8.2]{Pap94}, for example),
we can compose the deterministic Turing machine given by
Lemma~\ref{lem:logspace} with the model checking procedure given in
Theorem~\ref{thm:ctllogspace} to produce a deterministic Turing machine that
uses polynomial space in $|\mathcal{M}|$. Hence, we have shown that each online
Turing machine $\mathcal{M}$ can be model checked against~$\phi$ in space
polynomial in $|\mathcal{M}|$. Since Theorem~\ref{thm:realctl} implies that CTL
synthesis is EXPTIME-complete, we have the following theorem.

\begin{theorem}
\label{thm:easy}
Let $\phi$ be a satisfiable CTL formula. If there always exists an online Turing
machine $\mathcal{M}$ that models $\phi$, where $|\mathcal{M}|$ is polynomial in $\phi$, then PSPACE = EXPTIME.
\end{theorem}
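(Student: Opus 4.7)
The plan is to turn the small-model assumption into a PSPACE algorithm for CTL synthesis and combine with the EXPTIME-hardness of Theorem~\ref{thm:realctl}. Given $\phi$, the algorithm first nondeterministically guesses an online Turing machine $\mathcal{M}$ of size polynomial in $|\phi|$ (by hypothesis, if $\phi$ is satisfiable, such an $\mathcal{M}$ exists), and then verifies that $\mathcal{M}$ is a model of $\phi$. If the verification step runs in space polynomial in $|\phi|$, then CTL synthesis lies in NPSPACE, and Savitch's theorem together with Theorem~\ref{thm:realctl} yields PSPACE $=$ EXPTIME.

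The verification step is the content of the proof. Using Lemma~\ref{lem:tm} we can, in principle, regard $\mathcal{M}$ as equivalent to a Mealy machine $\mathcal{T}(\mathcal{M})$ whose state space enumerates all configurations of $\mathcal{M}$'s storage tape. Of course, $|\mathcal{T}(\mathcal{M})|$ is exponential in $|\mathcal{M}|$, so we cannot materialise it. However, Lemma~\ref{lem:logspace} provides a deterministic transducer that streams $\mathcal{T}(\mathcal{M})$ using only $O(|\mathcal{M}|)$ workspace, and Theorem~\ref{thm:ctllogspace} tells us that CTL model checking on any Mealy machine $\mathcal{T}$ runs in space $O(|\phi|\cdot \log|\mathcal{T}|)$. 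Since $\log|\mathcal{T}(\mathcal{M})|$ is polynomial in $|\mathcal{M}|$, the model checker requires only space polynomial in $|\phi|$ when presented with $\mathcal{T}(\mathcal{M})$.

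The one subtle step, and really the main obstacle, is that we never actually have $\mathcal{T}(\mathcal{M})$ written on a tape; we must compose the transducer of Lemma~\ref{lem:logspace} with the model checker of Theorem~\ref{thm:ctllogspace} without ever storing the exponentially long intermediate output. This is exactly the setting of the standard space-composition lemma (e.g.\ \cite[Proposition 8.2]{Pap94}): whenever the model checker requests a bit of $\mathcal{T}(\mathcal{M})$, the composed machine re-runs the transducer from scratch on the appropriate input and discards all of its workspace once the requested bit has been produced. Since both components run in polynomial space in $|\mathcal{M}|+|\phi|$, the composition does too, and the combined machine decides whether $\mathcal{M}\models\phi$ in polynomial space.

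Putting the pieces together, the nondeterministic algorithm that guesses a polynomially sized $\mathcal{M}$ and then invokes the composed verifier uses polynomial space overall. Hence under the hypothesis the CTL synthesis problem is in NPSPACE $=$ PSPACE, and EXPTIME-hardness from Theorem~\ref{thm:realctl} forces PSPACE $=$ EXPTIME, as claimed.
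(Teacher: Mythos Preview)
Your proof is correct and follows essentially the same approach as the paper: guess a polynomial-size online Turing machine, compose the transducer of Lemma~\ref{lem:logspace} with the model checker of Theorem~\ref{thm:ctllogspace} via the standard space-composition lemma (the paper also cites \cite[Proposition~8.2]{Pap94}), and conclude via NPSPACE $=$ PSPACE and the EXPTIME-hardness of Theorem~\ref{thm:realctl}. If anything, your write-up is more explicit about the composition step than the paper's own version.
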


\section{PSPACE = EXPTIME Implies Small Models}
\label{sec:hardctl}


In this section we show the opposite direction of the result given in
Section~\ref{sec:small2pe}. We show that if PSPACE = EXPTIME, then, for every CTL
formula~$\phi$ that has a model, there exists a polynomially
sized online Turing machine that is a model of~$\phi$. We start our proof of
this result with a translation from CTL to universal \cb tree automata.
In~\cite{KV97} it was shown that every CTL formula $\phi$ can be translated to
an \emph{alternating} \cb tree automaton $\mathcal{A(\phi)}$, whose language is the models of $\phi$.
It is relatively straightforward to translate this alternating tree automaton
$\mathcal{A(\phi)}$ into a universal tree automaton $\mathcal{U(\phi)}$. 

One complication is that the output alphabet of $\mathcal{U(\phi)}$ is not
$2^{\Pi_O}$. This is because the reduction to universal tree automata augments
each output letter with additional information, which is used by
$\mathcal{U(\phi)}$ to resolve the nondeterministic choices made by
$\mathcal{A(\phi)}$. Hence, each output letter of $\mathcal{U(\phi)}$ contains
an actual output $\sigma_O \in 2^{\Pi_O}$, along with some extra information.


Let $\mathcal{T} = (S, \Sigma_I, \Sigma_O, \tau, l_\mathcal{T}, \start, \inpu)$
be a Mealy machine, where each output $\sigma_O \in \Sigma_O$ contains some
element $a \in 2^{\Pi_O}$ with $a \subseteq \sigma_O$. We define $\mathcal{T}
\restriction \Pi_O$ to be a modified version of $\mathcal{T}$ that only produces
outputs from the set $2^{\Pi_O}$. Formally, we define $\mathcal{T}
\restriction \Pi_O$ to be the Mealy machine $\mathcal{T'} = (S, \Sigma_I,
2^{\Pi_O}, \tau, l_\mathcal{T'}, \start, \inpu)$ where, if $l_\mathcal{T}(s,
\sigma_I) = \sigma_O$, then we define $l_\mathcal{T'}(s, \sigma_I) = \sigma_O
\cap 2^{\Pi_O}$ for all $s \in S$. We have the following:

\begin{lemma}
\label{lem:ctl2ucbta}
Let~$\phi$ be a CTL formula, which is defined over the set $\Pi_I$ of input
propositions, and the set $\Pi_O$ of output propositions. We can construct a
universal \cb tree automaton $\mathcal{U(\phi)} = (S, \Sigma_I, \Sigma_O ,
\start, \delta, F, \inpu)$ such that:
\begin{itemize}
\item For every model $\mathcal{T} \in \mathcal{L(U(\phi))}$, we have that
$\mathcal{T} \restriction 2^{\Pi_O}$ is a model of $\phi$.
\item For every model $\mathcal{T}'$ of $\phi$ there is a model $\mathcal{T} \in \mathcal{L(U(\phi))}$
with $\mathcal{T} \restriction 2^{\Pi_O}=\mathcal T'$.
\item The size of the set $S$ is polynomial in $|\phi|$.
\item Each letter in $\Sigma_I$ and $\Sigma_O$ can be stored in space
polynomial in $|\phi|$. 
\item The transition function $\delta$ can be computed in time polynomial in
$|\phi|$.
\item The state $\start$ can be computed in polynomial time.
\end{itemize}
\end{lemma}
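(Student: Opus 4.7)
The plan is to invoke the Kupferman--Vardi construction from \cite{KV97}, which associates to every CTL formula $\phi$ an alternating \cb tree automaton $\A(\phi)$ whose states are essentially the subformulas of $\phi$, and whose language over Mealy machines with alphabets $2^{\Pi_I},2^{\Pi_O}$ coincides with the set of models of $\phi$. This already supplies a state set of size $O(|\phi|)$, a transition function computable in polynomial time, a distinguished initial state (identified with $\phi$ itself), and a final-state set $F$ marking the until-subformulas that must be discharged. What remains is to strip the alternation of its nondeterministic (existential) choices, while paying only a polynomial price in the size of the output alphabet.

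I would absorb the existential branching into the output alphabet. The only source of nondeterminism in $\A(\phi)$ comes from subformulas of the form $E\x\psi$ and $E(\phi_1\until\phi_2)$: each asserts the existence of a successor or a witnessing path. I augment each output letter $\sigma_O\in 2^{\Pi_O}$ by a function $h$ that, for every existential subformula $E\psi$ in the closure of $\phi$, selects an input direction $h(E\psi)\in 2^{\Pi_I}$ designating a witnessing successor. Formally $\Sigma_O = 2^{\Pi_O}\times (2^{\Pi_I})^{\E(\phi)}$, where $\E(\phi)$ is the set of existential subformulas of $\phi$. Since $|\E(\phi)|\le|\phi|$ and each direction is a letter of size $|\Pi_I|$, each output letter is storable in space polynomial in $|\phi|$. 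The input alphabet remains $\Sigma_I=2^{\Pi_I}$ and $\inpu$ is the given initial input.

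The universal automaton $\mathcal{U}(\phi)$ mimics $\A(\phi)$ by consulting the augmented output to resolve every existential choice. For a state $E\x\psi$ read with output $(\sigma_O,h)$, the transition demands the single pair $(\psi,h(E\x\psi))$; for $E(\phi_1\until\phi_2)$, the transition requires either $\phi_2$ to hold locally, or $\phi_1$ to hold locally together with the same until-obligation continuing in the hinted direction $h(E(\phi_1\until\phi_2))$. All universal obligations (the $A$-formulas, conjunctions arising from Boolean structure, and the implicit universal quantification over inputs) are unchanged and yield the remaining pairs in $\delta(s,(\sigma_O,h))\subseteq S\times\Sigma_I$. The co-B\"uchi set $F$ is inherited from $\A(\phi)$. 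This construction clearly preserves the $O(|\phi|)$ state bound and the polynomial-time computability of $\delta$ and of $\start$.

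The main content to verify is the two-way language correspondence. For the first bullet, given $\mathcal{T}\in\mathcal{L}(\mathcal{U}(\phi))$, the accepting run graph witnesses that $\mathcal{T}\restriction \Pi_O$ discharges every universal obligation and, via the hints encoded in its outputs, also every existential one, so it is a model of $\phi$. For the second bullet, given any Mealy machine $\mathcal{T}'$ modelling $\phi$, the Kupferman--Vardi construction yields an accepting run of $\A(\phi)$ on $\mathcal{T}'$; from this run I read off, at each node and each active existential subformula, a witnessing input direction, and attach these choices to the outputs of $\mathcal{T}'$ to produce the desired $\mathcal{T}$ with $\mathcal{T}\restriction\Pi_O=\mathcal{T}'$. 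The main obstacle is the bookkeeping that threads several simultaneously active existential obligations through a single augmented output letter while ensuring the co-B\"uchi condition is preserved along every branch of the run graph; but this is standard for this style of de-alternation and causes no blow-up in the state space.
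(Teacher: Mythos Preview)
Your high-level strategy---absorb the nondeterministic choices of the alternating automaton $\A(\phi)$ into the output alphabet---is exactly the paper's approach. However, your claim that ``the only source of nondeterminism in $\A(\phi)$ comes from subformulas of the form $E\x\psi$ and $E(\phi_1\until\phi_2)$'' is not right, and the hint function $h:\E(\phi)\to 2^{\Pi_I}$ you propose does not resolve all disjunctions in the transition formulas. First, propositional disjunctions $\psi_1\lor\psi_2$ inside $\phi$ induce disjunctions in the transition relation of $\A(\phi)$ that are not direction choices at all. Second, even for $E(\phi_1\until\phi_2)$ your own description still contains an unresolved disjunction: ``either $\phi_2$ holds locally, or $\phi_1$ holds locally together with the until-obligation continuing in the hinted direction.'' A universal automaton cannot leave this ``either/or'' in its transition; some additional information in the output letter must commit to one disjunct. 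Both defects are easy to repair (e.g.\ add one bit per disjunctive subformula, or record the chosen disjunct of the until-unfolding), but as written the construction does not yield a universal automaton.

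The paper sidesteps this bookkeeping with a more uniform encoding. Instead of one direction per existential subformula, each augmented output letter carries a function $\gamma:S\to 2^{S\times\Sigma_I}$ that records, for \emph{every} automaton state $q$, a full satisfying assignment of the positive Boolean formula $\delta(q,\sigma_O)$; the universal automaton's transition at $q$ is then simply the conjunction over $\gamma(q)$. Because each transition formula of $\A(\phi)$ has only polynomially many atoms, each $\gamma(q)$ is a polynomial-size set of pairs and the whole letter remains polynomial in $|\phi|$. This buys generality---it works for any alternating automaton with small transition formulas, not only those arising from CTL---at the cost of a somewhat heavier output alphabet than a repaired version of your direction-hint idea would need. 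The paper also inserts a short Moore-to-Mealy conversion (pairing states with the last input letter) before the de-alternation step, since the automaton from \cite{KV97} is stated for Moore machines; you silently assumed a Mealy-accepting $\A(\phi)$, which is harmless but worth noting.
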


The techniques used in~\cite{SF07} show how the automaton given by
Lemma~\ref{lem:ctl2ucbta} can be translated into a
safety tree automaton $\mathcal{F(\phi)}$ such that the two automata are
emptiness equivalent.

\begin{lemma}[\cite{SF07}]
\label{lem:ucbta2dsta}
Given the universal \cb tree automaton $\mathcal{U(\phi)}$, whose state space is
$S_\mathcal{U}$, we can construct a deterministic safety tree automaton $\mathcal{F(\phi)}
=  (S, \Sigma_I, \Sigma_O , \start, \delta, \inpu)$ such that:
\begin{itemize}
\item If $\mathcal{L(U(\phi))}$ is not empty, then $\mathcal{L(F(\phi))}$ is not
empty. Moreover, if $\mathcal{T}$ is in $\mathcal{L(F(\phi))}$, then
$\mathcal{T}\restriction 2^{\Pi_O}$ is a model of $\phi$.
\item Each state in~$S$ can be stored in space polynomial in $|S_\mathcal{U}|$. 
\item The transition function~$\delta$ can be computed in time polynomial in
$|S_\mathcal{U}|$.
\item Each letter in $\Sigma_I$ and $\Sigma_O$ can be stored in space
polynomial in $|S_\mathcal{U}|$.
\item The state $\start$ can be computed in time polynomial in
$|S_\mathcal{U}|$.
\end{itemize}
\end{lemma}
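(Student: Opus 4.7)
The plan is to apply the bounded-synthesis paradigm: refine the co-B\"uchi condition to a safety condition by tracking, for each state of $\mathcal{U}(\phi)$, a bounded counter of rejecting-state visits, and then bundle these counters into a single deterministic state via a subset-style construction.

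The first step is to establish a polynomial bound $k = k(|S_\mathcal{U}|)$ with the property that whenever $\mathcal{L(U(\phi))}$ is non-empty, it contains a Mealy machine whose run graph on $\mathcal{U}(\phi)$ visits the set of co-B\"uchi-rejecting states at most $k$ times on every infinite path. The standard argument here uses the game characterization of emptiness of universal co-B\"uchi tree automata: Eve picks transitions of a witness Mealy machine, Adam picks an input direction and a universal branch, and Eve wins if only finitely many rejecting states are seen. Memoryless determinacy of parity games bounds the size of a winning Mealy machine polynomially in $|S_\mathcal{U}|$; the resulting run graph then has polynomial size, and each infinite path must cycle without ever revisiting a rejecting state, which yields a polynomial bound on visits per path.

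Given this bound $k$, I would define $\mathcal{F}(\phi)$ whose states are counter functions $f : S_\mathcal{U} \to \{0,1,\dots,k,\bot\}$, where $\bot$ marks unreachability. The initial state sends $\start_\mathcal{U}$ to $0$ or $1$ depending on whether $\start_\mathcal{U}$ is rejecting, and every other state to $\bot$. On output $\sigma_O$ in input direction $\sigma_I$, the unique successor function is
\[
f'(q') \;=\; \max \bigl\{\, f(q) + \mathbf{1}[q' \in F] \;:\; f(q) \neq \bot \text{ and } (q', \sigma_I) \in \delta(q, \sigma_O) \,\bigr\},
\]
with the maximum of the empty set interpreted as $\bot$; if any value of $f'$ would exceed $k$, the automaton enters a rejecting sink, which is the sole means of violating safety. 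Since $f'$ is determined by $(f, \sigma_O, \sigma_I)$, the construction is deterministic. Each counter fits in $O(\log k)$ bits, so a state occupies $O(|S_\mathcal{U}| \log |S_\mathcal{U}|)$ bits; transitions are computed in polynomial time by iterating over the counter entries and invoking $\delta$, which is polynomial-time computable by Lemma~\ref{lem:ctl2ucbta}.

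Correctness then splits into two easy directions. Soundness: if $\mathcal{T}$ is accepted by $\mathcal{F}(\phi)$, then by construction every path in the run graph of $\mathcal{T}$ on $\mathcal{U}(\phi)$ visits $F$ at most $k$ times, hence finitely often, so $\mathcal{T} \in \mathcal{L(U(\phi))}$, and the first bullet of Lemma~\ref{lem:ctl2ucbta} yields that $\mathcal{T} \restriction 2^{\Pi_O}$ is a model of $\phi$. Non-emptiness preservation: the witness Mealy machine produced in the first step is accepted by $\mathcal{F}(\phi)$ because its counter values never exceed $k$. The main obstacle is the polynomial visit bound --- this is the essential technical contribution of \cite{SF07}; the remainder is a standard Miyano--Hayashi-style counter subset construction adapted to trees.
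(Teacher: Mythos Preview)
Your construction is precisely the bounded-synthesis construction of \cite{SF07} that the paper cites (the paper does not supply its own proof of this lemma), so the overall shape is right. The gap is in your derivation of the bound $k$.

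The game you describe is played on positions $S_\mathcal{U}$: Eve outputs a letter, Adam picks an input direction and a conjunct $(q',\sigma_I)\in\delta(q,\sigma_O)$. A memoryless strategy for Eve in this game is a map $\eta\colon S_\mathcal{U}\to\Sigma_O$, but such a map does \emph{not} yield a Mealy machine with $|S_\mathcal{U}|$ states. The point of a \emph{universal} automaton is that several automaton states are simultaneously active at a single tree node, and $\eta$ may assign them conflicting outputs; a Mealy machine must commit to one output per node regardless of which automaton state Adam is tracking. Put differently, in the genuine emptiness game Eve's moves must be measurable with respect to the input history only, so memoryless determinacy on the $|S_\mathcal{U}|$-state arena does not give you a small model. (A quick sanity check: if a polynomial-size accepting Mealy machine always existed, CTL synthesis would lie in NP by guess-and-model-check, contradicting Theorem~\ref{thm:realctl} unless $\mathrm{NP}=\mathrm{EXPTIME}$.) The bound that \cite{SF07} actually establishes is exponential in $|S_\mathcal{U}|$, obtained from the exponential small-model property for universal co-B\"uchi tree automata (e.g.\ via the Kupferman--Vardi Safraless route).

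This does not break your proof, because the state encoding needs only $|S_\mathcal{U}|\cdot O(\log k)$ bits; with $k=2^{\mathrm{poly}(|S_\mathcal{U}|)}$ this is still polynomial in $|S_\mathcal{U}|$, and your transition computation and correctness arguments go through unchanged. The fix is therefore local: replace ``polynomial bound $k$'' by ``bound $k$ with $\log k$ polynomial in $|S_\mathcal{U}|$'', and justify it by the exponential small-model property rather than by memoryless determinacy on $S_\mathcal{U}$.
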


We will use the safety automaton $\mathcal{F(\phi)}$ given by
Lemma~\ref{lem:ucbta2dsta} to construct a polynomially sized model of $\phi$.
This may seem counter intuitive, because the number of states in
$\mathcal{F(\phi)}$ may be exponential in~$\phi$. However, we do not need to
build $\mathcal{F(\phi)}$. Instead our model will solve language emptiness
queries for $\mathcal{F(\phi)}$.

For each state $s \in S$ in $\mathcal{F(\phi)}$, we define $\mathcal{F}^s(\phi)$
to be the automaton $\mathcal{F(\phi)}$ with starting state~$s$. The
\emph{emptiness problem} takes a CTL formula~$\phi$ and a state of
$\mathcal{F(\phi)}$, and requires us to decide whether
$\mathcal{L}(\mathcal{F}^s(\phi)) = \emptyset$. Note that the input has
polynomial size in $|\phi|$. We first argue that this problem can be solved in
exponential time. To do this, we just construct $\mathcal{F}^s(\phi)$. Since
$\mathcal{F}^s(\phi)$ can have at most exponentially many states in $|\phi|$,
and the language emptiness problem for safety automata can be solved in
polynomial time, we have that our emptiness problem lies in EXPTIME.

\begin{lemma}
\label{lem:empty}
For every CTL formula $\phi$, and every state $s$ in $\mathcal{F(\phi)}$ we can
decide whether $\mathcal{L}(\mathcal{F}^s(\phi)) = \emptyset$ in exponential
time.
\end{lemma}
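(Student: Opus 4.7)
The plan is to treat this as a straightforward combination of two facts from Lemma~\ref{lem:ucbta2dsta}: (i) each individual state of $\mathcal{F}(\phi)$ has a representation of size polynomial in $|\phi|$, and (ii) the transition function $\delta$, as well as the alphabet letters, can be computed/stored in polynomial time and space in $|\phi|$. Together these bound the total number of states of $\mathcal{F}(\phi)$ by $2^{p(|\phi|)}$ for some polynomial $p$, and they let us explore that state space explicitly.

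First I would materialise $\mathcal{F}^s(\phi)$ from~$s$ by a standard reachability procedure: maintain a set~$R$ of already-discovered states (initialised to $\{s\}$) and a worklist, and repeatedly pop a state~$q$, iterate over all output letters $\sigma_O \in \Sigma_O$, evaluate $\delta(q,\sigma_O)$ using the polynomial-time algorithm from Lemma~\ref{lem:ucbta2dsta}, and insert every successor state into~$R$ if it is new. Since $|R|$ is bounded by $2^{p(|\phi|)}$ and each step runs in time polynomial in $|\phi|$, this enumeration finishes in time exponential in $|\phi|$. The resulting explicitly-represented deterministic safety tree automaton has size exponential in~$|\phi|$.

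Next I would apply the standard polynomial-time emptiness test for safety tree automata to this explicit automaton. Concretely, compute the greatest fixpoint $W \subseteq R$ of states~$q$ from which some output letter $\sigma_O$ exists with every pair $(q',\sigma_I) \in \delta(q,\sigma_O)$ satisfying $q' \in W$; this is done by iteratively deleting states that have no such output letter (a polynomial number of passes, each in polynomial time in the size of the explicit automaton). Since $\mathcal{F}(\phi)$ is a safety automaton (every run is accepting), $\mathcal{L}(\mathcal{F}^s(\phi))$ is non-empty if and only if $s \in W$: one can read off a Mealy machine whose states are those of $W$ by fixing, for each $q \in W$, a witnessing output letter.

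The whole procedure runs in time exponential in $|\phi|$: the enumeration is exponential, and the fixpoint computation is polynomial in the size of that exponential state space. There is no real obstacle here beyond bookkeeping; the substantive work was already carried out in Lemmas~\ref{lem:ctl2ucbta} and \ref{lem:ucbta2dsta}, which guarantee that the implicit representation of $\mathcal{F}(\phi)$ is polynomial and that transitions are polynomial-time computable. The main point worth being careful about is that ``deterministic'' here refers to the output letter uniquely determining the set of (successor, input) pairs, so the emptiness test really is a one-sided choice of outputs against universally quantified inputs, which is exactly the fixpoint above.
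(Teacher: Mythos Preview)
Your approach is essentially the same as the paper's: explicitly construct the exponentially large safety automaton $\mathcal{F}^s(\phi)$ using the polynomial-size state representations and polynomial-time transition evaluation guaranteed by Lemma~\ref{lem:ucbta2dsta}, then run a polynomial-time emptiness test on the explicit structure (the paper phrases this as reachability to ``rejecting'' states, you as a greatest-fixpoint computation, which are dual descriptions of the same safety-game solve). One small accounting slip: when you say ``each step runs in time polynomial in $|\phi|$'' you should also note that $|\Sigma_O|$ is itself only exponentially bounded, so processing a single state already costs exponential time---the overall bound is still exponential, so the conclusion stands.
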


The algorithm that we construct for Lemma~\ref{lem:empty} uses exponential time
and exponential space. However, our key observation is that, under the
assumption that PSPACE = EXPTIME, Lemma~\ref{lem:empty} implies that there must
exist an algorithm for the emptiness problem that uses polynomial space. We will
use this fact to construct $\mathcal{M(\phi)}$, which is a polynomially sized
online Turing machine that models $\phi$.

Let~$\phi$ be a CTL formula that uses $\Pi_I$ and $\Pi_O$ as the set of input
and output propositions, and suppose that $\phi$ has a model. Furthermore,
suppose that $\mathcal{F(\phi)} = (S, \Sigma_I, \Sigma_O , \start, \delta, F,
\inpu)$. The machine $\mathcal{M(\phi)}$ always maintains a current state $s \in
S$, which is initially set so that $s = \start$. Lemma~\ref{lem:ucbta2dsta}
implies that $s$ can be stored in polynomial space, and that setting $s =
\start$ can be done in polynomial time, and hence polynomial space.

Every time that $\mathcal{M(\phi)}$ reads a new input letter $\sigma_I \in
2^{\Pi_I}$ from the
input tape, the following procedure is executed. The machine loops through each
possible output letter $\sigma_O \in \Sigma_O$ and checks whether there is a
pair $(s', \sigma'_I) \in \delta(s, \sigma_O)$ such that
$\mathcal{L}(\mathcal{F}^{s'}(\phi)) \ne \emptyset$. When an output symbol
$\sigma_O$ and state $s'$ with this property are found, then the machine outputs
$\sigma_O \cap 2^{\Pi_O}$, moves to the state $s'$, and reads the next input
letter.

The fact that a suitable pair $\sigma_O$ and $s'$ always exists can be proved by
a simple inductive argument, which starts with the fact that
$\mathcal{L}(\mathcal{F}^{\start}(\phi)) \ne \emptyset$, and uses the fact that
we always pick a successor that satisfies the non-emptiness check. Moreover, it
can be seen that $\mathcal{M(\phi)}$ is in fact simulating some Mealy machine
$\mathcal{T}$ that is contained in  $\mathcal{L(F(\phi))}$. Therefore, by
Lemma~\ref{lem:ctl2ucbta}, we have that $\mathcal{M(\phi)}$ is a model of
$\phi$.

The important part of our proof is that, if PSPACE = EXPTIME, then this
procedure can be performed in polynomial space. Since each letter in $\Sigma_O$
can be stored in polynomial space, we can iterate through all letters
in~$\Sigma_O$ while using only polynomial space. By Lemma~\ref{lem:empty}, the
check $\mathcal{L}(\mathcal{F}^{s'}(\phi)) \ne \emptyset$ can be performed in
exponential time, and hence, using our assumption that PSPACE = EXPTIME, there
must exist a polynomial space algorithm that performs this check. Therefore, we
have constructed an online Turing machine that uses polynomial space and
models~$\phi$. Thus, we have shown the following theorem.

\begin{theorem}
\label{theo:hardctl}
Let~$\phi$ be a CTL formula that has a model. If PSPACE =
EXPTIME then there is an online Turing machine $\mathcal{M}$ that models $\phi$,
where $|\mathcal{M}|$ is polynomial in $\phi$.
\end{theorem}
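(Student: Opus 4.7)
The plan is to build the required online Turing machine $\mathcal{M}(\phi)$ by chaining the automaton constructions from Lemmas~\ref{lem:ctl2ucbta} and~\ref{lem:ucbta2dsta}, and then invoking the assumption PSPACE = EXPTIME to replace an exponential-time subroutine by a polynomial-space one. First I would translate $\phi$ into the universal \cb tree automaton $\mathcal{U}(\phi)$ via Lemma~\ref{lem:ctl2ucbta}, and then into the deterministic safety tree automaton $\mathcal{F}(\phi)$ via Lemma~\ref{lem:ucbta2dsta}. The essential quantitative facts I would lean on are that individual states of $\mathcal{F}(\phi)$ are storable in polynomial space in $|\phi|$, letters of $\Sigma_I, \Sigma_O$ are polynomial-size, and $\delta, \start$ are computable in polynomial time, even though $\mathcal{F}(\phi)$ itself is of exponential size.

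Next I would design $\mathcal{M}(\phi)$ to keep exactly one state $s \in S_{\mathcal{F}(\phi)}$ on its storage tape, initialised to $\start$. On each newly read input letter $\sigma_I \in 2^{\Pi_I}$, the machine enumerates $\sigma_O \in \Sigma_O$ and, for each successor $(s', \sigma'_I) \in \delta(s, \sigma_O)$, queries whether $\mathcal{L}(\mathcal{F}^{s'}(\phi)) \neq \emptyset$. On the first such $\sigma_O, s'$ found, it writes $\sigma_O \cap 2^{\Pi_O}$ to the output tape, updates its stored state to $s'$, and awaits the next input. Correctness of this greedy strategy follows by a direct induction on the length of the interaction: $\mathcal{L}(\mathcal{F}^{\start}(\phi)) \neq \emptyset$ (since $\phi$ has a model, via Lemmas~\ref{lem:ctl2ucbta} and~\ref{lem:ucbta2dsta}), and the non-emptiness check guarantees the invariant $\mathcal{L}(\mathcal{F}^s(\phi)) \neq \emptyset$ is preserved, so the loop never fails to find a witness. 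The behaviour induced by $\mathcal{M}(\phi)$ is thus that of some deterministic Mealy machine $\mathcal{T} \in \mathcal{L}(\mathcal{F}(\phi))$, and by Lemma~\ref{lem:ctl2ucbta} its $\Pi_O$-projection is a model of $\phi$.

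The key step is arguing that, under PSPACE = EXPTIME, the whole loop fits in polynomial space. Lemma~\ref{lem:empty} locates the emptiness problem for $\mathcal{F}^{s'}(\phi)$ in EXPTIME with polynomially sized input, so by hypothesis it has a polynomial-space decision procedure. Enumerating $\sigma_O$ uses polynomial space (since each letter is polynomial-size), each evaluation of $\delta(s, \sigma_O)$ uses polynomial space by Lemma~\ref{lem:ucbta2dsta}, and the state update is a simple overwrite. All the components are space-bounded by a polynomial in $|\phi|$, so both the storage tape length and the discrete control are polynomial in $|\phi|$, making $\mathcal{M}(\phi)$ small.

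The main obstacle I anticipate is bookkeeping rather than logic: one has to be careful that iterating a polynomial-space emptiness oracle inside an enumeration, combined with recomputing $\delta(s, \sigma_O)$ on the fly (we cannot afford to store $\mathcal{F}(\phi)$), actually composes without any hidden blow-up. This is precisely the same kind of space composition argument invoked in Section~\ref{sec:small2pe} via \cite[Proposition 8.2]{Pap94}, so I would verify that the subroutines only ever need to communicate polynomially many bits and that intermediate tapes can be recomputed on demand. Once that composition is in place, the theorem follows immediately.
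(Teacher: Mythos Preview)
Your proposal is correct and follows essentially the same approach as the paper: translate $\phi$ to $\mathcal{F}(\phi)$ via Lemmas~\ref{lem:ctl2ucbta} and~\ref{lem:ucbta2dsta}, maintain a single current state of $\mathcal{F}(\phi)$ on the storage tape, and on each input greedily search for an output letter whose successor passes the non-emptiness test of Lemma~\ref{lem:empty}, which becomes a polynomial-space subroutine under the PSPACE = EXPTIME hypothesis. The inductive correctness argument and the space-accounting you sketch are exactly those used in the paper.
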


Theorem~\ref{theo:hardctl} is not constructive. However, if a polynomially sized
online Turing machine that models~$\phi$ exists, then we can always find it in
PSPACE by guessing the machine, and then model checking it.

\section{Small And Fast Models Imply EXPTIME $\subseteq$ P/poly}
\label{sec:easyppoly}

In this section we show that, if all satisfiable CTL formulas have a
polynomially sized model that responds to all inputs within polynomial time,
then EXPTIME $\subseteq$ P/poly.

Let $\mathcal{A}_b$ be an alternating Turing machine that with a tape of
length~$b$ that is written in binary. Since the machine is alternating, its
state space $Q$ must be split into $Q_\forall$, which is the set of universal
states, and $Q_\exists$, which is the set of existential states. The first step
of our proof is to construct a CTL formula~$\phi_b$, such that all models
of~$\phi_b$ are forced to simulate $\mathcal{A}_b$. The formula will use a set
of input propositions $\Pi_I$ such that $|\Pi_I| = b$. This therefore gives us
enough input propositions to encode a tape configuration of $\mathcal{A}_b$. The
set of output propositions will allow us to encode a configuration of
$\mathcal{A}_b$. More precisely, the output propositions use:
\begin{itemize}
\item $b$ propositions to encode the current contents of the tape,
\item $\log_2(b)$ propositions to encode the current position of the
tape head, and
\item $\log_2(Q)$ propositions to encode $q$, which is the current state of the machine.
\end{itemize}

Our goal is to simulate $\mathcal{A}_b$. The environment will perform the
following tasks in our simulation. In the first step, the environment will
provide an initial tape configuration for $\mathcal{A}_b$. In each subsequent
step, the environment will resolve the non-determinism, which means that it will
choose the specific existential or universal successor for the current
configuration. In response to these inputs, our CTL formula will require that
the model should
faithfully simulate~$\mathcal{A}_b$. That is, it should start from the specified
initial tape, and then always follow the existential and universal choices made
by the environment. It is not difficult to write down a CTL formula that
specifies these requirements.

In addition to the above requirements, we also want our model to predict whether
$\mathcal{A}_b$ halts. To achieve this we add the following output
propositions:
\begin{itemize}
\item Let $C = |Q| \cdot 2^b \cdot b$ be the total number of configurations
that~$\mathcal{A}_b$ can be in. We add $\log(C)$ propositions to encode a
counter $c$, which gives the number of steps that have been simulated.
\item We add a proposition~$h$, and we will require that $h$ correctly predicts
whether $\mathcal{A}_b$ halts from the current configuration.
\end{itemize}
The counter~$c$ can easily be enforced by a CTL formula. To implement $h$, we
will add the following constraints to our CTL formula:
\begin{itemize}
\item If $q$ is an accepting state, then~$h$ must be true.
\item If $c$ has reached its maximum value, then $h$ must be false.
\item If $q$ is non-accepting and $c$ has not reached its maximum value then:
\begin{itemize}
\item If $q$ is an existential state, then $h\leftrightarrow E\x h$.
\item If $q$ is a universal state, then $h\leftrightarrow A\x h$.
\end{itemize}
\end{itemize}
These conditions ensure that, whenever the machine is in an existential state,
there must be at least one successor state from which $\mathcal{A}_b$ halts, and
whenever the machine is in a universal state, $\mathcal{A}_b$ must halt from all
successors. We will use~$\phi_b$ to denote the CTL formula that we have
outlined.

Suppose that there is an online Turing machine~$\mathcal{M}$ model of~$\phi_b$
that is both small and fast. We argue that, if this assumption holds, then we
can construct a polynomial time Turing machine $\mathcal{T}$ that solves the
halting problem for $\mathcal{A}_b$. Suppose that we want to decide whether
$\mathcal{A}_b$ halts on the input word $I$. The machine $\mathcal{T}$ does the
following:
\begin{itemize}
\item It begins by giving $I$ to $\mathcal{M}$ as the first input letter.
\item It then proceeds by simulating $\mathcal{M}$ until the first output letter
is produced.
\item Finally, it reads the value of $h$ from the output letter, and then
outputs it as the answer to the halting problem for $\mathcal{A}_b$.
\end{itemize}
Since $\mathcal{M}$ is both small and fast, we have that $\mathcal{T}$ is a
polynomial time Turing machine. Thus, we have the following lemma.

\begin{lemma}
\label{lem:phib}
If $\phi_b$ has a model that is small and fast, then there is a polynomial-size
polynomial-time Turing machine that decides the halting problem for~$\mathcal{A}_b$.
\end{lemma}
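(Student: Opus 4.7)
The plan is to make precise the reduction sketched just before the lemma. Given a small and fast online Turing machine $\mathcal{M}$ that models $\phi_b$, I would construct a conventional Turing machine $\mathcal{T}_b$ that, on input $I \in 2^{\Pi_I}$, writes $I$ as the first input letter on a simulated copy of $\mathcal{M}$'s input tape, simulates $\mathcal{M}$ until its first output letter appears, extracts the proposition $h$ from that letter, and accepts iff $h$ holds.

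Correctness reduces to showing that, in every model of $\phi_b$, the value of $h$ at the root coincides with whether $\mathcal{A}_b$ halts on $I$. I would prove this by backward induction along the counter $c$. At the base case $c = C - 1$ the formula directly forces $h$ to be false, which is correct because any computation that fails to halt within $C = |Q| \cdot 2^b \cdot b$ steps must repeat a configuration and hence loop forever. At inner configurations the biconditionals $h \leftrightarrow E\x h$ at existential states and $h \leftrightarrow A\x h$ at universal states, together with the requirement that $h$ hold whenever $q$ is an accepting state, inductively pin $h$ to the alternating halting predicate. Since $\mathcal{M}$ models $\phi_b$, the first output letter produced on input $I$ describes the initial configuration of $\mathcal{A}_b$, with tape $I$, counter $c = 0$, and starting state $q_0$, so its $h$-component must carry the correct halting answer.

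The resource accounting is straightforward. Because $\mathcal{M}$ is small, its action table and its storage tape both fit in space polynomial in $|\phi_b|$, and $|\phi_b|$ is polynomial in $b$; hardwiring $\mathcal{M}$ into $\mathcal{T}_b$ therefore yields a description of polynomial size. Because $\mathcal{M}$ is fast, only polynomially many internal transitions occur between receiving the first input and emitting the first output, and a standard universal simulator can execute each such transition in polynomial time. Extracting the $h$ bit from the output letter is immediate.

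The main obstacle I anticipate is the uniqueness part of the correctness argument. The constraints $h \leftrightarrow E\x h$ and $h \leftrightarrow A\x h$ are only well-founded because the separately enforced counter $c$ strictly increases along every path and eventually exposes every branch to the grounding clause at $c = C - 1$. Care is required to verify that every model of $\phi_b$ assigns $h$ uniquely on all reachable configurations, rather than leaving some configuration with a free choice; the counter together with the base-case and accepting-state clauses must be shown to collapse all admissible valuations of $h$ to the halting predicate of $\mathcal{A}_b$. Once this is in hand, the remaining simulation bookkeeping is routine.
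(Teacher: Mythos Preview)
Your proposal is correct and follows exactly the approach the paper uses: feed the input $I$ as the first letter to the small-and-fast model $\mathcal{M}$, simulate until the first output, and read off $h$. Your backward-induction argument on the counter $c$ and your discussion of the uniqueness obstacle are in fact more explicit than what the paper supplies (the paper simply asserts that the clauses on $h$ force it to encode the alternating halting predicate), so you have filled in detail the paper leaves implicit.
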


We now use Lemma~\ref{lem:phib} to prove the main result of this section: if
every CTL formula has a small and fast model, then EXPTIME $\subseteq$ P/poly.
We will do this by showing that there is a P/poly algorithm for solving an
EXPTIME-hard problem.

We begin by defining our EXPTIME-hard problem. We define the problem
\texttt{HALT-IN-SPACE} as follows. Let $\mathcal{U}$ be a
universal\footnote[1]{here, the word ``universal'' means an alternating Turing
machine that is capable of simulating all alternating Turing machines.}
alternating Turing machine. We assume that $\mathcal{U}$ uses space polynomial
in the amount of space used by the machine that is simulated.
The inputs to our problem are:
\begin{itemize}
\item An input word $I$ for $\mathcal{U}$.
\item A sequence of blank symbols $B$, where $|B| =
\mathit{poly}(|I|)$.
\end{itemize}
Given these inputs, \texttt{HALT-IN-SPACE} requires us to decide whether
$\mathcal{U}$ halts when it is restricted to use a tape of size $|I| + |B|$.
Since $B$ can only ever add a polynomial amount of extra space, it is
apparent that this problem is APSPACE-hard, and therefore EXPTIME-hard.

\begin{lemma}
\label{lem:his}
\texttt{HALT-IN-SPACE} is EXPTIME-hard.
\end{lemma}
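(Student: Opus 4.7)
The plan is to reduce from an arbitrary EXPTIME problem via the Chandra--Kozen--Stockmeyer characterization $\mathrm{APSPACE} = \mathrm{EXPTIME}$. So I would fix an EXPTIME-hard language $L$ and pick an alternating Turing machine $M_L$ that decides $L$ using space $p(|x|)$ for some polynomial $p$. The goal is then to show that deciding whether $x \in L$ can be reduced in polynomial time to an instance of \texttt{HALT-IN-SPACE}.

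The reduction works as follows. Given an input $x$, produce the pair $(I, B)$ where $I$ is a suitable encoding of the pair $(M_L, x)$ for the universal alternating Turing machine $\mathcal{U}$, and $B$ is a string of blanks. Since $M_L$ is fixed, the encoding satisfies $|I| = O(|x|)$. Because $\mathcal{U}$ simulates $M_L$ using space polynomial in the space used by $M_L$, there is some polynomial $q$ such that simulating $M_L$ on $x$ requires at most $q(p(|x|))$ cells of $\mathcal{U}$'s tape. I would then choose the length of $B$ so that $|I|+|B| \geq q(p(|x|))$ while still ensuring $|B| = \mathit{poly}(|I|)$; both are polynomial in $|x|$, and $|I|$ is linear in $|x|$, so the required bound on $|B|$ holds.

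The correctness argument is then a matter of unwinding the simulation: $x \in L$ iff $M_L$ accepts $x$ using space $p(|x|)$, iff $\mathcal{U}$ on input $I$ accepts using space $q(p(|x|)) \leq |I|+|B|$, iff the \texttt{HALT-IN-SPACE} instance $(I, B)$ is a yes-instance. The reduction is computable in polynomial time because $I$ and $B$ have polynomial total size and both are straightforward to write down.

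The only real subtlety I expect is bookkeeping rather than a deep obstacle: making sure that the polynomial overhead of $\mathcal{U}$'s simulation is charged against $|B|$ rather than against $|I|$ (so that $|B|$ remains polynomial in $|I|$), and being careful that the alternating notion of ``halting'' used in the definition of \texttt{HALT-IN-SPACE} coincides with acceptance of the alternating machine $M_L$, which is how it has been used in the construction of $\phi_b$ earlier in the section. Once these conventions are pinned down, the reduction and hence EXPTIME-hardness follow immediately.
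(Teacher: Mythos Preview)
Your proposal is correct and essentially matches the paper's proof: both reduce from alternating polynomial-space computation via the universality of $\mathcal{U}$, encoding the machine and input as $I$ and padding with enough blanks $B$ to cover the polynomial simulation overhead. The only cosmetic difference is that the paper reduces from \texttt{APSPACE-HALT} (where the alternating machine $\mathcal{T}$ is part of the input) while you fix a single alternating machine $M_L$ for an EXPTIME-hard language; your framing makes the bound $|I| = O(|x|)$ slightly more immediate, but the substance is identical.
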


A P/poly algorithm consists of two parts: a polynomial-time Turing
machine~$\mathcal{T}$, and a polynomially-sized \emph{advice function}~$f$. The
advice function maps the length of the input of $\mathcal{T}$ to a
polynomially-sized advice string. At the start of its computation, the
polynomial-time Turing machine~$\mathcal{T}$ is permitted to read $f(i)$,
where~$i$ is the length of input, and use resulting advice string to aid in its
computation. A problem lies in P/poly if there exists a machine~$\mathcal{T}$
and advice function~$f$ that decide that problem.

We now provide a P/poly algorithm for \texttt{HALT-IN-SPACE}. We begin by
defining the advice function~$f$. Let $\mathcal{U}_b$ be the machine
$\mathcal{U}$, when it is restricted to only use the first~$b$ symbols on its
tape. By Lemma~\ref{lem:phib}, there exists a polynomial-size polynomial-time
deterministic Turing machine $\mathcal{T}_b$ that solves the halting problem for
$\mathcal{U}_b$. We define $f(b)$ to give $\mathcal{T}_b$. Since $\mathcal{T}_b$
can be described in polynomial space, the advice function $f$ gives polynomial
size advice strings.

The second step is to give a polynomial-time algorithm that uses~$f$ to solve
\texttt{HALT-IN-SPACE}. The algorithm begins by obtaining $\mathcal{T}_{i+b} =
f(|I| + |B|)$ from the advice function. It then simulates $\mathcal{T}_{i+b}$ on
the input word $I$, and outputs the answer computed by $\mathcal{T}_{i+b}$. By
construction, this algorithm takes polynomial time, and correctly solves
\texttt{HALT-IN-SPACE}. Therefore, we have shown that an EXPTIME-hard problem
lies in P/poly, which gives the main theorem of this section.

\begin{theorem}
If every satisfiable CTL formula $\phi$ has a polynomial size model that
responds to all inputs in polynomial time, then EXPTIME
$\subseteq$ P/poly.
\end{theorem}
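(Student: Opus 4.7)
The plan is to show that an EXPTIME-hard problem lies in P/poly by using the hypothesis to synthesise a suitable advice function. First I would invoke Lemma~\ref{lem:his}, which tells us that \texttt{HALT-IN-SPACE}, a decision problem asking whether a fixed universal alternating Turing machine $\mathcal{U}$ halts on input $I$ while restricted to a tape of length $|I|+|B|$, is EXPTIME-hard. This reduces the theorem to the claim that \texttt{HALT-IN-SPACE} is in P/poly.

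To place \texttt{HALT-IN-SPACE} in P/poly, I would exploit Lemma~\ref{lem:phib} uniformly over the parameter $b$. For each $b$, let $\mathcal{U}_b$ be $\mathcal{U}$ restricted to a tape of length $b$; then the associated CTL formula $\phi_b$ is, by the hypothesis of the theorem, modelled by some small and fast online Turing machine, and Lemma~\ref{lem:phib} produces a polynomial-size polynomial-time deterministic Turing machine $\mathcal{T}_b$ that decides the halting problem for $\mathcal{U}_b$. I would then define the advice function by setting $f(n) = \mathcal{T}_n$, where $n$ is the input length. Since $\mathcal{T}_n$ has polynomial description length, $f$ supplies polynomial-size advice strings, as required.

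The polynomial-time component of the P/poly algorithm is then almost mechanical: on an input $(I,B)$ with $n = |I|+|B|$, it reads the advice $\mathcal{T}_n = f(n)$, simulates $\mathcal{T}_n$ on the input word $I$, and outputs its answer. Correctness follows from the specification of $\mathcal{T}_n$ as a decider for the halting problem of $\mathcal{U}_n$, which is precisely what \texttt{HALT-IN-SPACE} asks about on inputs of total length $n$. The running time is polynomial in $n$ because $\mathcal{T}_n$ itself runs in polynomial time and the simulation overhead is at most polynomial. Together with Lemma~\ref{lem:his}, this shows EXPTIME $\subseteq$ P/poly.

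The main obstacle is the uniformity of the parameters: one needs to be sure that the size of $\phi_b$ depends polynomially on $b$, that the same polynomial can bound both the description length and the per-step response time of the hypothesised small and fast model of $\phi_b$, and hence that $\mathcal{T}_b$ produced by Lemma~\ref{lem:phib} has both size and running time polynomial in $b$. Once these dependencies are confirmed, the advice string $f(n)$ fits in polynomial space and the simulation on $I$ runs in polynomial time, and the remainder is a direct composition of Lemmas~\ref{lem:phib} and~\ref{lem:his}.
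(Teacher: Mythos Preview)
Your proposal is correct and follows essentially the same approach as the paper: you invoke Lemma~\ref{lem:his} to reduce to showing \texttt{HALT-IN-SPACE} is in P/poly, use Lemma~\ref{lem:phib} to obtain the machines $\mathcal{T}_b$ as advice, and run the advised machine on the input $I$. The uniformity concerns you flag are indeed implicit in the construction of $\phi_b$ (its size is polynomial in $b$ by inspection of the propositions and constraints described), so the argument goes through exactly as you outline.
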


\section{EXPTIME $\subseteq$ P/poly Implies Small and Fast Models}
\label{sec:hardppoly}

Let~$\phi$ be a CTL formula that has a model. In this section
we show that if EXPTIME $\subseteq$ P/poly, then there always exists an
polynomially sized online Turing machine that is a model of~$\phi$, that also
responds to every input within a polynomial number of steps. 

The proof of this result closely follows the proof given in
Section~\ref{sec:hardctl}. In that proof, we looped through every possible
output letter and solved an instance of the emptiness problem. This was
sufficient, because we can loop through every output letter in polynomial space.
However, when we wish to construct a fast model, this approach does not work,
because looping through all possible output letters can take exponential time.

For this reason, we introduce a slightly modified version of the emptiness
problem, which we call the \emph{successor emptiness problem}. Using this
problem will allow us to find the correct output letter using binary search,
rather than exhaustive enumeration. The inputs to our problem will be a CTL
formula $\phi$, a state~$s$, an input letter~$\sigma_I \in 2^{\Pi_I}$ of
$\mathcal{F(\phi)}$, an integer $n$, and a bit string $w$ of length $n$. Given
these inputs, the problem is to determine whether there is a letter $\sigma_O
\in \Sigma_O$ such that:
\begin{itemize}
\item the first $n$ bits of $\sigma_O$ are $w$, and
\item there exists $(s', \sigma_I) \in \delta(s, \sigma_O)$ such that
$\mathcal{L}(\mathcal{F}^{s'}(\phi)) \ne \emptyset$.
\end{itemize}



Lemma~\ref{lem:ucbta2dsta} implies that the input size of this problem is
polynomial in $|\phi|$. In fact, if the CTL formula $\phi$ is fixed, then
Lemma~\ref{lem:ucbta2dsta} implies that all other input parameters have bounded
size. For a fixed formula $\phi$, let $(\phi, s, \sigma_I, n, w)$ be the input
of the successor emptiness problem that requires the longest representation. We
pad the representation of all other inputs so that they have the same length as
$(\phi, s, \sigma_I, n, w)$.

Next, we show how our assumption that EXPTIME $\subseteq$ P/poly allows us to
argue that successor emptiness problem lies in P/poly. Note that the successor
emptiness problem can be solved in exponential time, simply by looping through
all possible letters in $\sigma_O \in \Sigma_O$, checking whether the first $n$
bits of $\sigma_O$ are $w$, and then applying the algorithm of
Lemma~\ref{lem:empty}. Also note that our padding of inputs does not affect this
complexity. Therefore, the successor emptiness problem lies in EXPTIME, and our
assumption then implies that it also lies in P/poly.

Let $\mathcal{T}$ and $f$ be the polynomial time Turing machine and advice
function that witness the inclusion in P/poly. Our padding ensures that we have
that we have, for each CTL formula~$\phi$, a unique advice string in~$f$ that is
used by $\mathcal{T}$ to solve all successor emptiness problems for $\phi$. Hence,
if we append this advice string to the storage tape of $\mathcal{T}$, then we
can construct a polynomial time Turing machine (with no advice function) that
solves all instances of the successor emptiness problem that depend on~$\phi$.
Therefore, we have shown the following lemma.

\begin{lemma}
\label{lem:ppolyempty}
If EXPTIME $\subseteq$ P/poly then, for each CTL formula~$\phi$, there is a
polynomial time Turing machine that solves all instances of the successor
emptiness problem that involve $\phi$.
\end{lemma}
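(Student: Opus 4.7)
The plan is to establish an EXPTIME upper bound for the successor emptiness problem, invoke the hypothesis EXPTIME $\subseteq$ P/poly to obtain a P/poly decider, and then exploit the uniform input length enforced by the padding convention to absorb the advice string into an ordinary polynomial-time Turing machine.

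First I would verify that the successor emptiness problem lies in EXPTIME. Given an input $(\phi, s, \sigma_I, n, w)$, enumerate each candidate $\sigma_O \in \Sigma_O$, discard those whose first $n$ bits disagree with $w$, and for each survivor iterate over the pairs $(s', \sigma_I) \in \delta(s, \sigma_O)$, running the exponential-time emptiness test of Lemma~\ref{lem:empty} on $\mathcal{F}^{s'}(\phi)$. Lemma~\ref{lem:ucbta2dsta} guarantees that $\Sigma_O$ contains at most exponentially many letters of polynomial description length and that $\delta$ is computable in time polynomial in the state-space description, so the whole procedure runs in exponential time. The padding only inflates inputs polynomially, so the complexity bound survives it.

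The hypothesis EXPTIME $\subseteq$ P/poly now supplies a polynomial-time Turing machine $\mathcal{T}$ and a polynomially sized advice function $f$ that jointly decide the problem. Here the padding convention becomes essential: for a fixed $\phi$, every successor emptiness instance mentioning $\phi$ is padded to the same length $N_\phi$, so $\mathcal{T}$ always receives the identical advice string $\alpha_\phi := f(N_\phi)$. I would then construct $\mathcal{T}_\phi$ by hard-coding $\alpha_\phi$ onto a designated prefix of the storage tape (equivalently, baking it into the finite control) of $\mathcal{T}$. Since $|\alpha_\phi|$ is polynomial in $|\phi|$ and $\mathcal{T}$ itself runs in polynomial time, the resulting $\mathcal{T}_\phi$ is an advice-free polynomial-time machine that decides every successor emptiness instance depending on $\phi$.

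The main obstacle is essentially bookkeeping: one must check that the padding is well-defined, i.e.\ that $s$, $\sigma_I$, $n$, and $w$ each admit an a priori polynomial bound on their representation length once $\phi$ is fixed, so that a single $N_\phi$ exists and can actually be computed. This is immediate from Lemma~\ref{lem:ucbta2dsta}, which bounds state and alphabet descriptions polynomially in $|\phi|$ and thereby bounds $n$ and $w$ as well. Padding an arbitrary query up to length $N_\phi$ is then a trivial polynomial-time preprocessing step that the surrounding online Turing machine can perform just before calling $\mathcal{T}_\phi$, which is precisely what is needed when this lemma is consumed in the construction of the fast model in the next section.
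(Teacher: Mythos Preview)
Your proposal is correct and follows essentially the same approach as the paper: establish EXPTIME membership by enumerating $\sigma_O$, checking the prefix against $w$, and invoking Lemma~\ref{lem:empty}; then apply the hypothesis to obtain a P/poly decider; and finally use the padding convention to ensure a single advice string $f(N_\phi)$ suffices for all instances mentioning $\phi$, which is then hard-coded into the machine. Your write-up is in fact somewhat more careful than the paper's about the bookkeeping around the padding, but the argument is the same.
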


The construction of an online Turing machine that models $\phi$ is then the same
as the one that was provided in Section~\ref{sec:hardctl}, except that we use the
polynomial time Turing machine from Lemma~\ref{lem:ppolyempty} to solve the
successor emptiness problem in each step. More precisely, we use binary search
to find the appropriate output letter~$\sigma_O$ in each step. Since the size of~$\sigma_O$ is polynomial in~$|\phi|$, this can obviously be achieved in
polynomial time. Moreover, our online Turing machine still obviously uses only
polynomial space. Thus, we have established the main result of this section.

\begin{theorem}
Let~$\phi$ be a CTL formula that has a model. If EXPTIME
$\subseteq$ P/poly then there is a polynomially sized online Turing machine
$\mathcal{M}$ that models $\phi$ that responds to every input after a polynomial
number of steps.
\end{theorem}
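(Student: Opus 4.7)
The plan is to mimic the construction of Section~\ref{sec:hardctl} but replace the exhaustive loop over output letters, which costs exponential time, by a bit-by-bit binary search driven by an oracle for the successor emptiness problem. As before, I would first invoke Lemmas~\ref{lem:ctl2ucbta} and~\ref{lem:ucbta2dsta} to obtain the deterministic safety tree automaton $\mathcal{F(\phi)}$, whose states, transitions, and alphabet letters are all of polynomial size and computable in polynomial time from $\phi$. The online Turing machine $\mathcal{M(\phi)}$ I build will again store on its tape only a current state $s\in S$ of $\mathcal{F(\phi)}$, initialised to $\start$, and I will maintain the invariant that $\mathcal{L}(\mathcal{F}^s(\phi))\neq\emptyset$, which holds at initialisation because $\phi$ has a model.

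For each input letter $\sigma_I\in 2^{\Pi_I}$, the machine must select an output letter $\sigma_O\in\Sigma_O$ and a successor state $s'$ with $(s',\sigma_I)\in\delta(s,\sigma_O)$ and $\mathcal{L}(\mathcal{F}^{s'}(\phi))\neq\emptyset$. Rather than enumerate $\sigma_O$, I would determine $\sigma_O$ one bit at a time: having fixed a prefix $w$ of length $n$, I ask the successor emptiness oracle whether there is an extension of $w\cdot 0$ to a full $\sigma_O$ that yields a non-empty successor, and branch accordingly; if not, I fix the next bit to $1$, which by the invariant must succeed. After all bits of $\sigma_O$ have been determined, one final oracle call (or a short search over the polynomially many candidate successor states, which Lemma~\ref{lem:ucbta2dsta} bounds in size) identifies $s'$. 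The machine then outputs $\sigma_O\cap 2^{\Pi_O}$, updates its state to $s'$, and awaits the next input. Correctness follows by the same inductive argument as in Section~\ref{sec:hardctl}: the resulting behaviour is that of some Mealy machine in $\mathcal{L(F(\phi))}$, and Lemmas~\ref{lem:ucbta2dsta} and~\ref{lem:ctl2ucbta} then guarantee that its restriction to $2^{\Pi_O}$ is a model of $\phi$.

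The main obstacle is turning the assumption EXPTIME $\subseteq$ P/poly into a genuinely polynomial-time oracle that the online Turing machine can carry inside its own description. This is exactly what Lemma~\ref{lem:ppolyempty} provides: the successor emptiness problem lies in EXPTIME via Lemma~\ref{lem:empty}, hence in P/poly, and because we pad all instances sharing a fixed $\phi$ to a common input length, a single polynomially-sized advice string suffices for every call made during the run of $\mathcal{M(\phi)}$. I would hardwire this advice string into the discrete control (or the initial contents of the storage tape) of $\mathcal{M(\phi)}$, yielding a self-contained polynomial-time subroutine for the successor emptiness test.

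Finally, I would bound the resources. The tape stores only $s$, the current input letter, a candidate prefix $w$, the hardwired advice, and workspace for the polynomial-time oracle simulation, all of size polynomial in $|\phi|$. Each input is answered by at most $|\sigma_O|=\mathit{poly}(|\phi|)$ oracle calls, each running in polynomial time, followed by a polynomial-time lookup of $s'$ and the constant-time output write, giving an overall polynomial response time. This establishes that $\mathcal{M(\phi)}$ is both small and fast and models $\phi$, completing the theorem.
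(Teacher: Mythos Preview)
Your proposal is correct and follows essentially the same approach as the paper: you replace the exhaustive enumeration of output letters by a bit-by-bit binary search driven by the successor emptiness oracle, invoke Lemma~\ref{lem:ppolyempty} (with its padding argument) to obtain a single polynomial-time advice-augmented subroutine for all instances tied to~$\phi$, and then bound space and per-input time exactly as the paper does. The only minor addition you make explicit is the recovery of~$s'$ after~$\sigma_O$ is fixed, which is immediate since Lemma~\ref{lem:ucbta2dsta} makes $\mathcal{F(\phi)}$ deterministic and $\delta$ polynomial-time computable.
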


\section{LTL Specifications}

In this section we extend our results to LTL. Since LTL specifications can be
translated into universal \cb tree automata~\cite{Kupferman+Vardi/05/Safraless},
our approach is to first extend our results so that they apply directly to
universal \cb tree automata, and then to use this intermediate step to obtain
our final results for LTL.

We start with universal \cb tree automata. Recall that the arguments in Sections
\ref{sec:hardctl} and \ref{sec:hardppoly} start with a CTL formula, translate
the formula into a universal \cb tree automaton, and then provide proofs that
deal only with the resulting automaton. Reusing the proofs from these sections
immediately gives the following two properties.


\begin{theorem}
\label{theo:ucb1}
Let $\mathcal U$ be a universal \cb automaton tree that accepts Mealy machines.
\begin{enumerate}
 \item If PSPACE = EXPTIME then there is an online Turing machine $\mathcal{M}$
in the language of $\mathcal U$, where $|\mathcal{M}|$ is polynomial in the
states and a representation of the transition function of $\mathcal U$.
\item If EXPTIME $\subseteq$ P/poly then there is a polynomially-sized online
Turing machine $\mathcal{M}$ that is accepted by $\mathcal U$, which responds to
every input after a polynomial number of steps.
\end{enumerate}
\end{theorem}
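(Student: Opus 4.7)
The plan is to re-run the arguments of Sections~\ref{sec:hardctl} and~\ref{sec:hardppoly} verbatim, starting from $\mathcal U$ directly rather than from a CTL formula $\phi$. In those sections, the CTL formula was used only to produce a universal \cb tree automaton (via Lemma~\ref{lem:ctl2ucbta}), and every subsequent step worked with that automaton. So here I would simply skip the initial translation and feed $\mathcal U$ into the pipeline. The only thing that has to be checked is that every ``polynomial in $|\phi|$'' bound that appears in the earlier proofs is in fact a polynomial bound in the size $|\mathcal U|$ of the representation (the state set together with the transition function) of the given automaton.

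The first step is to apply Lemma~\ref{lem:ucbta2dsta} to $\mathcal U$ to obtain a deterministic safety tree automaton $\mathcal F$ whose states and alphabet letters can be stored in space polynomial in $|\mathcal U|$, and whose transition function $\delta$ and starting state $\start$ are computable in time polynomial in $|\mathcal U|$. Although $\mathcal F$ may have exponentially many states, this is not a problem, because we never build it explicitly; we only issue emptiness queries $\mathcal{L}(\mathcal F^s(\mathcal U)) \ne \emptyset$. Exactly as in Lemma~\ref{lem:empty}, such a query can be decided by constructing $\mathcal F^s(\mathcal U)$ in exponential time and running the polynomial-time emptiness check for safety automata; hence the emptiness problem for $\mathcal F^s(\mathcal U)$ lies in EXPTIME, measured in $|\mathcal U|$.

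For part (1), assume PSPACE = EXPTIME. Then the emptiness query above admits a polynomial-space algorithm. We now mimic the construction in Section~\ref{sec:hardctl}. The online Turing machine $\mathcal M$ stores a current state $s$ of $\mathcal F$ on its storage tape (polynomial in $|\mathcal U|$), initialised to $\start$. Upon reading an input letter $\sigma_I$, it iterates through all output letters $\sigma_O \in \Sigma_O$ (each of polynomial size), looks for a pair $(s',\sigma_I) \in \delta(s,\sigma_O)$ such that $\mathcal L(\mathcal F^{s'}(\mathcal U))\neq\emptyset$, outputs (the $2^{\Pi_O}$-projection of) $\sigma_O$, and moves to $s'$. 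Nonemptiness of $\mathcal L(\mathcal U)$ guarantees inductively that a suitable $\sigma_O$ always exists, so $\mathcal M$ simulates some Mealy machine in $\mathcal L(\mathcal F)$, and is therefore in $\mathcal L(\mathcal U)$. The whole computation uses only polynomial space in $|\mathcal U|$.

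For part (2), assume EXPTIME $\subseteq$ P/poly. We formulate the successor emptiness problem exactly as in Section~\ref{sec:hardppoly}, but parameterised by $\mathcal U$ instead of $\phi$: given $\mathcal U$, $s$, $\sigma_I$, $n$, and $w\in\{0,1\}^n$, decide whether some $\sigma_O$ with prefix $w$ admits a transition to an $s'$ with $\mathcal L(\mathcal F^{s'}(\mathcal U))\neq\emptyset$. By the same argument as before, this problem is in EXPTIME in $|\mathcal U|$, and padding the shorter inputs gives every instance for a fixed $\mathcal U$ the same length, so the P/poly inclusion yields a single advice string (of polynomial size) together with a polynomial-time machine that decides every successor emptiness query for $\mathcal U$. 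Hardcoding this advice string onto the storage tape produces a polynomial-time procedure without advice. The online Turing machine for $\mathcal U$ is then obtained as in Section~\ref{sec:hardppoly}: it uses bitwise binary search, guided by the successor emptiness oracle, to determine the next output letter $\sigma_O$ bit by bit in time polynomial in $|\mathcal U|$.

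The main technical obstacle is essentially bookkeeping: verifying that replacing $|\phi|$ by $|\mathcal U|$ preserves every polynomial bound throughout. This hinges on Lemma~\ref{lem:ucbta2dsta} being stated in terms of the parameter $|S_\mathcal{U}|$ (and the representation of its transitions) rather than $|\phi|$, which is exactly what that lemma provides. Hence both items follow from essentially the same constructions as in the CTL case.
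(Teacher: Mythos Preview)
Your proposal is correct and matches the paper's own argument almost verbatim: the paper simply observes that the constructions in Sections~\ref{sec:hardctl} and~\ref{sec:hardppoly} start from the universal \cb tree automaton $\mathcal U(\phi)$ and never use $\phi$ again, so feeding an arbitrary $\mathcal U$ into the same pipeline (Lemma~\ref{lem:ucbta2dsta}, Lemma~\ref{lem:empty}, and the successor emptiness problem) yields both claims with all ``polynomial in $|\phi|$'' bounds replaced by ``polynomial in $|\mathcal U|$.'' One cosmetic remark: in the pure automaton setting there is no ambient set $\Pi_O$ to project to, so your machine should output $\sigma_O$ itself rather than $\sigma_O\cap 2^{\Pi_O}$; the projection step is an artefact of the CTL-to-automaton translation and can be dropped here.
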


The other two directions can be proved using slight alterations of our existing
techniques. An analogue of the result in Section~\ref{sec:small2pe} can be
obtained by using the fact that checking whether online Turing machine $\mathcal M$ is
accepted by a universal \cb tree automaton $\mathcal U$ can be done in in
$O\big((\log |\mathcal U|+ \log |\mathcal{T(M)}|)^2\big)$ time~\cite[Theorem
3.2]{vardi1986automata}. For the result in Section~\ref{sec:easyppoly}, we can
obtain an analogue by using a very similar proof. The key difference is in
Lemma~\ref{lem:phib}, where we must show that there is universal \cb tree
automaton with the same properties as $\phi_b$. The construction of a suitable
universal \cb tree automaton appears in the full version of the paper. Thus, we
obtain the other two directions.


\begin{theorem}
\label{theo:ucb2}
Let $\mathcal U$ be a universal safety word automaton that accepts Mealy
machines. 
\begin{enumerate}
\item If there is always an online Turing machine $\mathcal{M}$ in the language of $\mathcal U$,
where $|\mathcal{M}|$ has polynomial size in the states of $\mathcal U$, then PSPACE=EXPTIME.
\item If there is always an online Turing machine $\mathcal{M}$ in the language
of $\mathcal U$, where $|\mathcal{M}|$ has polynomial size in the states of
$\mathcal U$, which also responds to every input after polynomially many steps,
then EXPTIME $\subseteq$ P/poly.
\end{enumerate}
\end{theorem}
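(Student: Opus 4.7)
The plan is to mirror Sections \ref{sec:small2pe} and \ref{sec:easyppoly}, replacing the CTL formula $\phi$ throughout by the universal \cb tree automaton $\mathcal U$ (I read the statement of Theorem \ref{theo:ucb2} as being about tree automata, to match the hypothesis setting of Theorem \ref{theo:ucb1}). The two ingredients we need are (a) a polynomial-space model-checking procedure for an online Turing machine against $\mathcal U$, and (b) an EXPTIME-hardness result about emptiness of $\mathcal U$, together with a direct analogue of the halting-predictor specification $\phi_b$ from Section \ref{sec:easyppoly}.

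For part 1 I would guess a small online Turing machine $\mathcal M$ and verify $\mathcal M \in \mathcal{L}(\mathcal U)$ in polynomial space. By Lemma \ref{lem:logspace} the Mealy-machine unravelling $\mathcal{T(M)}$ can be produced on demand using $O(|\mathcal M|)$ space. The cited bound of Vardi 1986 gives $O\big((\log|\mathcal U| + \log|\mathcal{T(M)}|)^2\big)$ space to check $\mathcal{T(M)} \in \mathcal{L}(\mathcal U)$; since $\log|\mathcal{T(M)}|$ is polynomial in $|\mathcal M|$, this stays in polynomial space, and composition with the unravelling generator (as in the composition argument cited after Lemma \ref{lem:logspace}) preserves polynomial space overall. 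Hence emptiness of $\mathcal U$ lies in NPSPACE = PSPACE. Combined with the EXPTIME-hardness of universal \cb tree automaton emptiness, which follows from Theorem \ref{thm:realctl} via the reduction of Lemma \ref{lem:ctl2ucbta}, this yields PSPACE = EXPTIME.

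For part 2 I would imitate Lemma \ref{lem:phib} by producing, for each $b$, a polynomial-size universal \cb tree automaton $\mathcal U_b$ whose accepted Mealy machines are forced to simulate the alternating Turing machine $\mathcal A_b$ and whose first output contains a bit $h$ correctly predicting whether $\mathcal A_b$ halts. Given such $\mathcal U_b$, a small-and-fast model yields, exactly as in Lemma \ref{lem:phib}, a polynomial-size polynomial-time Turing machine $\mathcal T_b$ that decides halting of $\mathcal A_b$. The remainder of Section \ref{sec:easyppoly}, namely the definition $f(b)=\mathcal T_b$ and the reduction from \texttt{HALT-IN-SPACE}, then transfers verbatim to give EXPTIME $\subseteq$ P/poly.

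The main obstacle is the construction of $\mathcal U_b$. The CTL formula $\phi_b$ used the existential path quantifier to impose $h \leftrightarrow E\x h$ at existential configurations, something a universal tree automaton cannot enforce directly, since it only expresses conjunctions along branches. The fix, of exactly the kind hidden inside the translation of Lemma \ref{lem:ctl2ucbta}, is to enlarge the output alphabet with an extra annotation at each existential configuration that names the intended witnessing successor and also carries the predicted value of $h$; the universal automaton then only verifies the annotation locally (that the named successor is chosen and that its $h$-prediction matches the parent's), while branches that do not correspond to the announced witness are trivially accepted. With this annotation trick in place, all remaining requirements (correct tape evolution, correct head movement, correct counter $c$, $h$ true at accepting states, $h$ false when $c$ saturates, and $h \leftrightarrow A\x h$ at universal configurations) are safety conditions that a polynomial-size universal \cb tree automaton can enforce; assembling them cleanly into a single such automaton is where the real work of the construction lies.
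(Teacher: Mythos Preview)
Your plan is essentially the paper's own argument: part~1 is exactly the guess-and-model-check scheme, using the $O\big((\log|\mathcal U|+\log|\mathcal{T(M)}|)^2\big)$ bound from Vardi--Wolper for the acceptance check (the paper observes that $\mathcal U$ read as a nondeterministic reachability automaton has the complementary language, which gives the reduction to B\"uchi emptiness), and part~2 is the same halting-predictor construction with witness annotations added to the output alphabet. One small difference: the paper obtains EXPTIME-hardness of the automaton emptiness problem directly from the $\mathcal U_b$ construction itself, rather than via CTL synthesis and Lemma~\ref{lem:ctl2ucbta}; your route works equally well.

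There is one point in your sketch to tighten. You claim that $h \leftrightarrow A\x h$ at universal configurations is a safety condition a universal automaton can enforce, but the direction $\lnot h \to \lnot A\x h$, i.e.\ $\lnot h \to E\x\lnot h$, is existential and needs exactly the same witness trick you already introduced for the existential configurations. The paper's construction therefore adds a witness annotation in \emph{both} the existential-$h$-true and the universal-$h$-false cases (and checks ``$h$ holds at all successors'' resp.\ ``$\lnot h$ holds at all successors'' in the dual cases). Once you symmetrise your annotation in this way, your construction coincides with the paper's. Also, ``branches that do not correspond to the announced witness are trivially accepted'' should be read only as dropping the $h$-propagation obligation for that step; the simulation-correctness and counter obligations must still be sent along every branch.
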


Now we move on to consider LTL specifications. For LTL, the situation is more
complicated, because the translation from LTL formulas to universal \cb tree
automata does not give the properties used in Lemma~\ref{lem:ctl2ucbta}.


\begin{theorem}
\cite{Kupferman+Vardi/05/Safraless}
\label{theo:ltl2uct}
Given an LTL formula $\phi$, we can construct a universal
\cb tree automaton $\mathcal U_\phi$ with $2^{O(|\phi|)}$
states that accepts a Mealy machine ${\mathcal T}$ if, and only if, ${\mathcal
T}$ is a model of $\phi$.
\end{theorem}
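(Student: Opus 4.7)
The plan is to reduce the tree-automaton construction to a well-known word-automaton construction and then lift to trees. First, I would translate $\neg \phi$ into a nondeterministic Büchi word automaton $\mathcal B_{\neg\phi}$ of size $2^{O(|\phi|)}$ using the standard Vardi--Wolper tableau construction, whose states are consistent subsets of the closure of $\neg\phi$. Second, I would dualize $\mathcal B_{\neg\phi}$ to obtain a universal co-Büchi word automaton $\mathcal W_\phi$ of the same size whose language is exactly the set of infinite words over $2^{\Pi_I \cup \Pi_O}$ that satisfy $\phi$. Dualization of a nondeterministic Büchi automaton yields a universal co-Büchi automaton recognizing the complement language, and this step is purely syntactic, so the state blow-up is trivial.

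Third, I would lift the universal co-Büchi word automaton $\mathcal W_\phi$ to a universal co-Büchi tree automaton $\mathcal U_\phi$ over the alphabets $\Sigma_I = 2^{\Pi_I}$ and $\Sigma_O = 2^{\Pi_O}$. The state set, initial state, and set $F$ of rejecting states are inherited unchanged from $\mathcal W_\phi$. For the transition function, from state $q$ reading output $\sigma_O$, the tree automaton sends, for each input letter $\sigma_I \in \Sigma_I$, a copy in state $q'$ along direction $\sigma_I$ whenever $q' \in \delta_{\mathcal W_\phi}(q, \sigma_I \cup \sigma_O)$. That is, $\delta_{\mathcal U_\phi}(q, \sigma_O) = \{(q', \sigma_I) : q' \in \delta_{\mathcal W_\phi}(q, \sigma_I \cup \sigma_O)\}$. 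Since this only rearranges the word transition function, the state count remains $2^{O(|\phi|)}$.

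Correctness follows from the definition of the run graph. For a Mealy machine $\mathcal T$, each infinite branch of its computation tree is determined by an infinite sequence of inputs and carries an infinite word over $2^{\Pi_I \cup \Pi_O}$. An accepting run graph of $\mathcal U_\phi$ on $\mathcal T$ projects, along any such branch, to a run of $\mathcal W_\phi$ on the corresponding word in which rejecting states occur only finitely often; conversely, the collection of such universal-word runs glues into an accepting run graph. Hence $\mathcal T \in \mathcal L(\mathcal U_\phi)$ iff every branch of $\mathcal T$'s computation tree satisfies $\phi$, which is exactly the condition $\mathcal T \models \phi$.

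The main obstacle is really only packaged rather than novel: it lies in setting up the lifting so that each node's universal choice over successor states of $\mathcal W_\phi$ is combined with a universal choice over the input directions $\sigma_I$ without inflating the state space, and in checking that the co-Büchi condition transfers cleanly between the word run and the branch projection of the tree run graph. Once the two universal quantifications (over next states and over directions) are folded into the single co-Büchi tree-automaton transition function as above, the size bound $2^{O(|\phi|)}$ and the equivalence are both immediate consequences of the Vardi--Wolper word construction and its dualization.
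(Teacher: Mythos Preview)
The paper does not give its own proof of this theorem; it is stated purely as a citation to Kupferman and Vardi's Safraless paper and is used as a black box. Your sketch is exactly the standard construction underlying that cited result: build a nondeterministic B\"uchi word automaton for $\neg\phi$ via the Vardi--Wolper tableau, dualise to a universal co-B\"uchi word automaton for $\phi$, and lift to a tree automaton by universally sending a copy along every input direction. So your proposal is correct and matches the intended source; there is simply nothing in the present paper to compare it against beyond the citation itself.
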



Note that this translation gives a universal \cb tree automaton that has
exponentially many states in $|\phi|$. Unfortunately, this leads to less clean
results for LTL. For the results in
Sections~\ref{sec:small2pe} and~\ref{sec:hardctl}, we have the following
analogues. 


%

\begin{theorem}
\label{theo:ltl}
Let $\phi$ be an LTL specification.
\begin{enumerate}
\item If there is always an online Turing machine $\mathcal{M}$ exponential in the length of $\phi$ that models $\phi$, then EXPSPACE=2EXPTIME.
\item If PSPACE = EXPTIME, then there is a exponentially sized online Turing
machine $\mathcal{M}$ that models $\phi$.
\end{enumerate}
\end{theorem}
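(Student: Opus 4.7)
The plan is to route both parts of Theorem~\ref{theo:ltl} through the universal \cb tree automaton results stated in Theorems~\ref{theo:ucb1} and~\ref{theo:ucb2}, using Theorem~\ref{theo:ltl2uct} as the bridge between LTL and its associated automaton $\mathcal{U}_\phi$, which has $2^{O(|\phi|)}$ states. The single-exponential blow-up of that translation is precisely what converts the ``polynomial in the automaton'' conclusions of the earlier theorems into the ``exponential in $|\phi|$'' claims here, and explains why each LTL statement is one exponent weaker than its CTL counterpart.

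For part~(2), the argument is essentially a plug-in. Given $\phi$, I would construct $\mathcal{U}_\phi$ and apply Theorem~\ref{theo:ucb1}(1) under the assumption PSPACE $=$ EXPTIME. This yields an online Turing machine $\mathcal{M}$ in $\mathcal{L}(\mathcal{U}_\phi)$ whose size is polynomial in the states and a representation of the transition function of $\mathcal{U}_\phi$, hence exponential in $|\phi|$. Theorem~\ref{theo:ltl2uct} guarantees that membership in $\mathcal{L}(\mathcal{U}_\phi)$ is equivalent to being a model of $\phi$, so $\mathcal{M}$ is the required exponentially sized model.

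For part~(1), I would mirror the guess-and-model-check argument of Section~\ref{sec:small2pe}, scaled up by one exponent. Given $\phi$, nondeterministically guess a candidate online Turing machine $\mathcal{M}$ of size at most exponential in $|\phi|$, unravel it to a Mealy machine $\mathcal{T}(\mathcal{M})$ as in Lemma~\ref{lem:tm}, and decide acceptance of $\mathcal{T}(\mathcal{M})$ by $\mathcal{U}_\phi$ using the tree-automaton model-checker cited after Theorem~\ref{theo:ucb1}, which runs in space $O\bigl((\log|\mathcal{U}_\phi| + \log|\mathcal{T}(\mathcal{M})|)^{2}\bigr)$. Since $\log|\mathcal{U}_\phi| = O(|\phi|)$ and $\log|\mathcal{T}(\mathcal{M})|$ is polynomial in $|\mathcal{M}|$, the overall nondeterministic procedure uses space exponential in $|\phi|$. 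Invoking Savitch's theorem (NEXPSPACE $=$ EXPSPACE) gives an EXPSPACE decision procedure for LTL realizability, and since LTL synthesis is 2EXPTIME-complete we conclude $\mathrm{2EXPTIME} \subseteq \mathrm{EXPSPACE}$; the reverse inclusion is standard, so EXPSPACE $=$ 2EXPTIME.

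The main, and essentially only, obstacle is space accounting during the model-check: $\mathcal{U}_\phi$ has exponentially many states and cannot be materialised in full within the allotted workspace. As in Lemma~\ref{lem:logspace}, I would present $\mathcal{U}_\phi$ implicitly, computing states and transitions from $\phi$ on demand, and likewise produce $\mathcal{T}(\mathcal{M})$ on demand from $\mathcal{M}$; these implicit oracles are then composed with the poly-logarithmic-space tree-automaton model-checker via the standard composition of space-bounded Turing machines. Once this bookkeeping is in place, the rest is routine.
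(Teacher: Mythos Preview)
Your proposal is correct and matches the paper's approach: part~(2) is exactly the paper's argument (apply Theorem~\ref{theo:ucb1} via Theorem~\ref{theo:ltl2uct}), and part~(1) is the same guess-and-model-check scheme scaled up one exponent. The only difference is that for part~(1) the paper model-checks the guessed machine directly against the LTL formula using the $O((\log|\mathcal{T}|+|\phi|)^2)$-space bound of~\cite{vardi1986automata}, rather than against $\mathcal{U}_\phi$; this sidesteps your implicit-automaton bookkeeping entirely, though your route works just as well.
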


The first of these two claims is easy to prove: we can guess an exponentially
sized online Turing machine, and then model check it. For the second claim, we
simply apply Theorem~\ref{theo:ucb1}. 

In fact, we can prove a stronger statement for the second part of
Theorem~\ref{theo:ltl}. QPSPACE is the set of problems that can be solved in
$O(2^{(\log n_d)^c})$ space for some constant~$c$. We claim that if EXPTIME
$\subseteq$ QPSPACE, then, for every LTL formula $\phi$, there is an
exponentially sized online Turing machine $\mathcal{M}$ that models $\phi$. This
is because, in the proofs given in Section~\ref{sec:hardctl}, if we have an
algorithm that solves the emptiness problem in QPSPACE, then the online Turing
machine that we construct will still run in exponential time in the formula.

We can also prove one of the two results regarding small and fast online Turing
machines. The following result is implied by Theorem~\ref{theo:ucb1}.
\begin{theorem}
If EXPTIME $\subseteq$ P/poly then every LTL formula has an exponentially sized
online Turing machine model, which responds to every input after an exponential
number of steps.
\end{theorem}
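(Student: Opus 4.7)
The plan is to reduce the claim to Theorem~\ref{theo:ucb1}(2) via the LTL-to-tree-automaton translation of Theorem~\ref{theo:ltl2uct}. Given an LTL formula~$\phi$, I would first apply Theorem~\ref{theo:ltl2uct} to obtain a universal \cb tree automaton $\mathcal U_\phi$ whose state space has size $2^{O(|\phi|)}$ and whose language is exactly the set of Mealy-machine models of~$\phi$. I would then invoke the hypothesis EXPTIME $\subseteq$ P/poly together with Theorem~\ref{theo:ucb1}(2) on~$\mathcal U_\phi$ to obtain an online Turing machine $\mathcal M$ in the language of $\mathcal U_\phi$ whose description has size polynomial in $|\mathcal U_\phi|$ and which responds to every input in a number of steps polynomial in $|\mathcal U_\phi|$.

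The final step is purely quantitative: substituting $|\mathcal U_\phi|=2^{O(|\phi|)}$ into the polynomial bounds on $|\mathcal M|$ and on the per-input response time turns them into bounds of the form $2^{O(|\phi|)}$. Since $\mathcal M$ is accepted by~$\mathcal U_\phi$ and the language of $\mathcal U_\phi$ is precisely the set of models of $\phi$, the machine $\mathcal M$ is itself a model of $\phi$, so $\mathcal M$ witnesses the desired conclusion.

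The only real obstacle is bookkeeping: one must check that the translation of Theorem~\ref{theo:ltl2uct} delivers $\mathcal U_\phi$ in a form amenable to Theorem~\ref{theo:ucb1}(2), in the sense that the transition function has a representation of size polynomial in $|\mathcal U_\phi|$ and that the input/output alphabets are stored in that many bits. This is standard for the Safraless construction of \cite{Kupferman+Vardi/05/Safraless}, so no new construction is required; the exponential blow-up from LTL is absorbed by passing it through Theorem~\ref{theo:ucb1}(2) as the base parameter. No other obstacles arise, because the entire argument of Section~\ref{sec:hardppoly} was already phrased with the underlying universal \cb tree automaton as its input parameter, which is precisely what Theorem~\ref{theo:ucb1} isolates.
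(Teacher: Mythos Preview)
Your proposal is correct and matches the paper's approach exactly: the paper states that the result ``is implied by Theorem~\ref{theo:ucb1}'', which is precisely the route you take---translate $\phi$ to $\mathcal U_\phi$ via Theorem~\ref{theo:ltl2uct} and then apply Theorem~\ref{theo:ucb1}(2), absorbing the exponential blow-up in $|\mathcal U_\phi|$ into the polynomial bounds. Your remark about the bookkeeping on the representation size of $\mathcal U_\phi$ is more explicit than what the paper writes, but it is the right check and does not constitute a different argument.
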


We can strengthen this result to ``If all EXPTIME problems are polylogspace
reducible to P/poly then every LTL formula has an exponentially sized online
Turing machine model, which responds to every input after an exponential number
of steps'' with the same reason we used for strengthening the previous theorem:
in the proofs given in Section~\ref{sec:hardctl}, if we have an algorithm that
solves the emptiness problem in QPTIME using an advice tape of quasi-polynomial
size, then the online Turing machine that we construct will still run in
exponential time in the formula.

However, we cannot prove the opposite direction. This is because the proof used
in Theorem~\ref{theo:ucb2} would now produce an exponential time Turing
machine with an advice function that gives exponentially sized advice strings.
Therefore, we cannot draw any conclusions about the class P/poly.

\bibliographystyle{abbrv}
\bibliography{references}

\appendix
\newpage
\section{Online Turing Machines}
\label{app:tm}

In this section we give a full definition for Online Turing machines. These
definitions were not necessary for the main body of the paper, but they will be
used in some of the proofs in the appendix.

We first define Turing machines with input and output, and we then
introduce additional restrictions to ensure that each input is followed by
exactly one output. Once this has been done, we will formally define the
synthesis problem for Online Turing machines.

\subsection{Space Bounded Turing Machines with Input and Output}

A deterministic space bounded Turing machine with input and output is a
three-tape Turing machine defined by a tuple $(S, \Sigma_I, \Sigma_T, \Sigma_O,
\delta, \start, c, \init, \inpu)$. The set~$S$ is a finite set of states, and
the state $\start \in S$ is a starting state. The sets $\Sigma_I$, $\Sigma_T$,
and $\Sigma_O$ give the \emph{alphabet symbols} for the input, storage, and
output tapes. We require that there is a \emph{blank} symbol $\sqcup$, such that
$\sqcup$ is contained in $\Sigma_I$, $\Sigma_T$, and $\Sigma_O$. The function
$\delta$ is a \emph{transition function} which maps elements of $S \times
\Sigma_I \times \Sigma_T \times \Sigma_O$ to elements of $S \times (\Sigma_I
\times D) \times (\Sigma_T \times D) \times (\Sigma_O \times D)$, where $D =
\{\leftarrow, -, \rightarrow\}$ is the set of \emph{directions}. The number $c
\in \nats$ gives the \emph{space bound} for the machine, and the sequence $\init
\in (\Sigma_T)^c$ gives the initial contents of the storage tape, and the letter
$\inpu$ gives the initial input symbol. We define the size $|\mathcal{M}|$ of a
space bounded Turing machine $\mathcal{M}$ to be amount of space used by the
tuple $(S, \Sigma_I, \Sigma_T, \Sigma_O, \delta, \start, c, \init)$.

The machine has three tapes $I = I_1, I_2, \dots$, $T = T_1, T_2, \dots T_c$,
and $O = O_1, O_2, \dots$, which we call the \emph{input}, \emph{storage}, and
\emph{output} tapes, respectively. Note that, while the input and output tapes
are infinite, the storage tape contains exactly $c$ positions. For all $i \in
\nats$ we have $I_i \in \Sigma_I$, and $O_i \in \Sigma_O$. For all $i$ in the
range $1 \le i \le c$ we have that $T_i \in \Sigma_T$. The tapes are initialized
as follows: the input tape $I$ contains an infinitely long \emph{input word},
where the first letter $I_1 = \inpu$. The output tape $O$ contains an infinite
sequences of blank symbols, and the storage tape $T$ contains the initial
storage word $\init$.

A \emph{position} gives the current state of the machine, along with the
position of the three tape heads. Formally, a position is a tuple of the form
$(s, i, j, k)$, where $s \in S$, $i,k \in \nats$, and $j$ is in the range $1 \le
j \le c$. For each $i \in \nats$, and direction $d \in D$,  we define:
\begin{equation*}
\nex(i, d) = \begin{cases}
i - 1 & \text{if $d = \leftarrow$ and $j > 0$,} \\
i + 1 & \text{if $d = \rightarrow$,} \\
i & \text{otherwise.} 
\end{cases}
\end{equation*}
We also define:
\begin{equation*}
\nex^c(i, d) = \begin{cases}
i - 1 & \text{if $d = \leftarrow$ and $j > 0$,} \\
i + 1 & \text{if $d = \rightarrow$ and $j < c$,} \\
i & \text{otherwise.} 
\end{cases}
\end{equation*}

The machine begins in the position $(\start, 0, 0, 0)$. We now describe one step
of the machine. Suppose that the machine is in position $(s, i, j, k)$, and that
$\delta(s, I_i, T_j, O_k) = (s', (\sigma_I, d_1), (\sigma_T, d_2), (\sigma_O,
d_3))$. First the symbols $\sigma_I$, $\sigma_T$, and $\sigma_O$ are written to
$I_i$, $T_j$, and $O_k$, respectively. Then, the machine moves to the position
$(s', \nex(i, d_1), \nex^c(j, d_2), \nex(k, d_3))$, and the process repeats. 

\subsection{Online Turing Machines}

An \emph{online} Turing machine is a Turing machine with input and output that
has additional restrictions on the transition function $\delta$. We wish to
ensure the following property: the machine may only read the symbol at position
$i$ on the input tape after it has written a symbol to position $i-1$ on the
output tape. Moreover, once a symbol has been written to the output tape, we
require that it can never be changed. Thus, the machine must determine the
first~$i$ symbols of the output before the $i+1$th symbol of the input can be
read.

To this end, we partition the set $S$ into the set~$S_I$ of \emph{input states},
and the set~$S_O$ of \emph{output states}, and we require that $\start \in S_O$.
While the machine is in an output state, it is prohibited from moving the input
tape head, or from moving the output tape head left. Furthermore, the machine
moves from an output state to an input state only when a symbol is written to
the output tape. Similarly, when the machine is in an input state it is
prohibited from moving the output tape head, and the machine only moves from an
input state to an output state when the input tape head is moved right. Finally,
the input tape is read-only. This means that in every state, the machine is
prohibited from overwriting the symbols on the input tape.

Formally, let $(s, i, j, k)$ be a position, and suppose that $\delta(s, I_i,
T_j, O_k) = (s', (\sigma_I, d_1), (\sigma_T, d_2), (\sigma_O, d_3))$. If $s \in
S_I$, then we require:
\begin{itemize}
\item  The direction $d_1$ is either $-$ or $\rightarrow$, and the direction $d_3$ is $-$.
\item  The symbol $\sigma_I = I_i$, and the symbol $\sigma_O = O_k$. 
\item If $d_1 = -$ then we require that $s' \in S_I$, and if $d_1 = \rightarrow$
then $s' \in S_O$.
\end{itemize}
Similarly, if $s \in S_O$, then we require:
\begin{itemize}
\item The direction $d_1$ is $-$, and the direction $d_3$ is either $-$ or $\rightarrow$. 
\item The symbol $\sigma_1 = I_i$, and if $d_3 = -$ then $\sigma_3
= O_k$.
\item If $d_3 = -$ then $s' \in S_O$, and if $d_3 = \rightarrow$ then $s' \in
S_I$.
\end{itemize}

\subsection{The Synthesis Problem}

We are interested in online Turing machines with input alphabet $\Sigma_I =
2^{\Pi_I}$ and output alphabet $\Sigma_O = 2^{\Pi_O}$, where $\Pi_I$  is a set
of input variables, and $\Pi_O$ is a set of output variables. The sets $\Pi_I$
and $\Pi_O$ are required to be disjoint. We will use $\emptyset$ as the blank
symbol for the input and output tapes.

Let $\mathcal{M}$ be an online Turing machine. For each input word~$\sigma$
that can be placed on the input tape, the machine produces an output word on the
output tape. This output word is either an infinite sequence of outputs made by
the machine, or a finite sequence of outputs, followed by an infinite sequence
of blanks. The second case arises when the machine runs forever while producing
only a finite number of outputs. 

We define $\mathcal{M}(\sigma)$ to be the combination of the inputs given to the
machine on the input variables, and the outputs made by the machine on the
output variables. Formally, let $I = I_0, I_1, \dots$, and $O =
O_0, O_1, \dots$ be the contents of the input and output tapes after
the machine has been allowed to run for an infinite number of steps on the
input word~$\sigma$. We define $\mathcal{M}(\sigma) = \sigma_0, \sigma_1,
\dots$, where $\sigma_i = I_i \cup O_i$.

Let $\phi$ be an LTL formula that uses $\Pi_I \cup \Pi_O$ as the set of atomic
propositions. We say that an online Turing machine $\mathcal{M}$
is a model of $\phi$ if $\mathcal{M}(\sigma) \models \phi$ for all input
words~$\sigma$. We say that $\phi$ is \emph{realizable} if there exists an
online Turing machine $\mathcal{M}$ that satisfies $\phi$. The LTL synthesis
problem for the formula $\phi$ is to decide whether $\phi$ is realizable.

On the other hand, let $\phi$ be a CTL formula that uses $\Pi_I \cup \Pi_O$ as
the set of atomic propositions. Note that, since an online Turing machine cannot
read the $i$th input letter until it has produced $i-1$ outputs, if $\sigma$ and
$\sigma'$ are two input words that agree on the first $i$ letters, then
$\mathcal{M}(\sigma)$ and $\mathcal{M}(\sigma')$ must agree on the first $i$
letters. Note also that, since the initial input letter is fixed, all of these
words must agree on the first letter. Hence, the set of words
$\{\mathcal{M}(\sigma) \; : \; \sigma \in (\Sigma_I)^{\omega}\}$ must form an
infinite directed labelled tree $(T, l)$. We say that $\mathcal{M}$ is a model
of $\phi$ if $T, l \models \phi$.

\section{Proof of Lemma~\ref{lem:ctl2ucbta}}

To prove this Lemma, we first invoke the result of~\cite{KV97} to argue that CTL
formulas can be translated to \emph{alternating} \cb tree automata, and then
argue that these automata can be translated into universal \cb tree automata.
Therefore, we proceed by first giving definitions for alternating tree automata,
and then providing a proof for Lemma~\ref{lem:ctl2ucbta}.

\subsection{Alternating Tree Automata}

We now define alternating \cb tree automata. An \emph{alternating \cb tree
automaton} is a tuple $\mathcal{A} = ( S, \Sigma_I, \Sigma_O , \start, \delta,
F, \inpu)$, where $S$ denotes a finite set of states, $\Sigma_I$ is a finite
input alphabet, $\Sigma_O$ is a finite output alphabet, $\start \in S$ is an
initial state, $\delta$ is a transition function, $F\subseteq S$ is a set of
final states, and $\inpu$ is an initial input letter. The transition function
$\delta: S \times \Sigma_O \rightarrow \mathbb{B}^+(S \times
\Sigma_I)$ maps a state and an output letter to a boolean formula that is built
from elements of $S \times \Sigma_I$, conjunction~$\wedge$, disjunction~$\vee$,
$\mathit{true}$, and $\mathit{false}$. Universal \cb tree automata correspond to
alternating \cb tree automata in which all formulas given by $\delta$ are
conjunctions.

The automaton runs on Mealy machines that use $\Sigma_I$ and $\Sigma_O$ as their
input and output alphabets. The acceptance mechanism is defined in terms of run
graphs. We define the \emph{run graph} of an alternating \cb tree automaton $ \mathcal{A} =
(S_\mathcal{A}, \Sigma_I, \Sigma_O , \start_\mathcal{A}, \delta_\mathcal{A},
F_\mathcal{A}, \inpu)$ with respect to a Mealy machine $\mathcal{T} = (S_\mathcal{T},
\Sigma_I, \Sigma_O, \tau, l_\mathcal{T}, \start_\mathcal{T}, \inpu_\mathcal{T})$
to be a (minimal) directed graph $G = (V,E)$ that satisfies
the following constraints:
\begin{itemize}
\item The vertices of $G$ satisfy $V \subseteq S_\mathcal{A} \times
S_\mathcal{T}$. 
\item The pair of initial states $(\start_\mathcal{A},\start_\mathcal{T})$
is contained in $V$.
\item For each vertex $(q,t)\in V$, and each input letter $\sigma_I \in
\Sigma_I$, the set 
\begin{equation*}
\left\{ \big(q',\sigma_I\big) \in S \times \Sigma_I \mid
\Big(\big(q,t\big),\big(q',\tau(t,\sigma_I)\big)\Big) \in E \right\}
\end{equation*} 
is a satisfying assignment of $\delta(q,l_\mathcal{T}(\sigma_I, t))$.
\end{itemize}
A run graph is \emph{accepting} if every infinite path $v_1, v_2, v_3, \dots \in
V^\omega$ contains only finitely many final states. A Mealy machine
$\mathcal{T}$ is accepted by $\mathcal{A}$ if it has an accepting run graph. The
set of Mealy machines accepted by an automaton $\mathcal{A}$ is called its
\emph{language} $\mathcal{L}(\mathcal{A})$. An automaton is empty if, and only
if, its language is empty.

The acceptance of a Mealy machine can also be viewed as the outcome of a game,
where player \emph{accept} chooses, for a pair $(q,t) \in S_\mathcal{A} \times
S_\mathcal{T}$, a set of atoms that satisfies $\delta(q,l_\mathcal{T}(\sigma_I,
t))$. Player \emph{reject} then chooses one of these atoms, and then moves to
the corresponding state. The Mealy machine is accepted if, and only if, player
\emph{accept} has a strategy that ensures that all paths visit~$F$ only a finite
number of times.

\subsection{Translating CTL to Universal \cb tree automata}

The reason we are interested in alternating \cb tree automata is that there
exists a translation from CTL formulas to alternating \cb tree
automata~\cite{KV97}. One technical complication is that the automata presented
in~\cite{KV97} accept \emph{Moore machines} rather than Mealy machines. Moore
machines are similar to Mealy machines, except the labelling function labels
states rather than transitions.

Formally, a Moore machine is a tuple $\mathcal{T} = (S, \Sigma_I, \Sigma_O,
\delta, l, \start, \inpu)$. The set~$S$ is a finite set of states, and the state
$\start \in S$ is the starting state. The set $\Sigma_I$ gives an input
alphabet, and the set $\Sigma_O$ gives an output alphabet. The transition
function $\delta : S \times \Sigma_I \rightarrow S$  gives, for each state and
input letter, an outgoing transition. The function $l: S \rightarrow \Sigma_O$ is a
labelling function, which gives, for each state, a corresponding output letter.
The letter $\inpu \in \Sigma_I$ gives an initial input letter for the machine.

We have the following Lemma.

\begin{lemma}
\cite{KV97}
\label{lem:ctl2aa}
For every CTL formula $\phi$ there is an alternating \cb tree automaton
$\mathcal{A(\phi)} = (S, \Sigma_I, \sigma_O , \start, \delta, F, \inpu)$ such that:
\begin{itemize}
\item $\mathcal{L(A)}$ is the set of Moore machines that model $\phi$.
\item We have that $|S|$ and $|\delta|$ have size polynomial in $|\phi|$.

Essentially, $|S|$ is contains the temporal subformulas of $\phi$, which explains its size immediately.
To account for the small size of $\phi$, note that for the universal subformulas a proof obligations are sent to all directions, while, for existential temporal subformulas, proof obligations are sent into some directions.
We therefore can simply represent these two types of obligations by two symbols, e.g., $\forall$ and $\exists$ or $\square$ and $\Diamond$.
\item Each letter in~$\Sigma_I$ and~$\Sigma_O$ can be stored in space
polynomial in $|\phi|$.
\item The starting state can be computed in polynomial time.
\end{itemize}
\end{lemma}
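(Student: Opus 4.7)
The plan is to take the state space $S$ to be (essentially) the set of subformulas of $\phi$, so that $|S|$ is linear in $|\phi|$. The starting state is $\phi$ itself, trivially computable in polynomial time. The input alphabet is $\Sigma_I = 2^{\Pi_I}$ and the output alphabet is $\Sigma_O = 2^{\Pi_O}$, each letter being a set of propositions bounded by $|\phi|$. The set $F$ of final states consists of all until-subformulas of $\phi$; this is the standard choice that makes the co-B\"uchi condition force the eventual discharge of every ``until'' obligation along every infinite path.

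The transition function $\delta$ is then defined by structural induction on subformulas. For a propositional literal $p$, I set $\delta(p, \sigma_O) = \mathit{true}$ if $p \in \sigma_O$ and $\mathit{false}$ otherwise (and dually for $\lnot p$); boolean connectives lift straightforwardly to $\mathbb{B}^+$. For $A\x\psi$ the obligation must propagate to $(\psi, \sigma_I)$ for every input letter $\sigma_I$, and for $E\x\psi$ to some input letter; for $A(\varphi \until \psi)$ I use the standard fixpoint unwinding that either discharges $\psi$ now or carries $\varphi$ (now) and $A(\varphi \until \psi)$ (at all successors) forward, and dually for $E(\varphi \until \psi)$. The subtlety is that writing the universal or existential quantification over $\sigma_I \in \Sigma_I$ explicitly would give a formula of size $2^{|\Pi_I|}$. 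As the lemma statement itself hints, I avoid this by enlarging the atom alphabet of $\delta$'s target language with two symbolic direction markers $\square$ and $\Diamond$, stipulating that $(\psi, \square)$ abbreviates $\bigwedge_{\sigma_I \in \Sigma_I} (\psi, \sigma_I)$ and $(\psi, \Diamond)$ abbreviates $\bigvee_{\sigma_I \in \Sigma_I}(\psi, \sigma_I)$; the run-graph semantics is then adjusted to expand these markers on the fly, preserving the intended language.

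With this compact encoding, each boolean formula produced by $\delta$ has size polynomial in $|\phi|$, and since output letters matter only through which atomic propositions of $\phi$ they contain, there are only polynomially many (state, output-letter) patterns to describe, so $|\delta|$ is polynomial in $|\phi|$. Correctness---that $\mathcal{L}(\mathcal{A}(\phi))$ is exactly the set of Moore-machine models of $\phi$---reduces to a routine structural induction using the game-theoretic view of alternating acceptance: player \emph{accept} selects disjuncts corresponding to chosen successors and to discharged untils, player \emph{reject} challenges conjuncts corresponding to remaining obligations, and the co-B\"uchi condition on until-states ensures that no ``until'' promise can be postponed forever. The main obstacle is really just the size bound on $\delta$: without the symbolic $\square/\Diamond$ representation the translation is exponential in $|\Pi_I|$, so verifying that this compact form is well-defined and semantically equivalent to the na\"ive expansion is the one piece of the argument that requires care.
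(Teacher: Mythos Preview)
Your proposal is correct and is precisely the construction the paper has in mind: the lemma is cited from \cite{KV97} and the paper's only argument is the explanatory remark embedded in the statement itself, namely that states are the (temporal) subformulas of $\phi$ and that the blow-up over directions is avoided by using symbolic markers $\square$/$\Diamond$ (equivalently $\forall$/$\exists$) in the transition formulas. You have simply fleshed out that sketch with the standard fixpoint unwindings for until and the co-B\"uchi choice of $F$, so there is no meaningful difference to report.
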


We first show how to transform $\mathcal{A(\phi)}$ into an alternating \cb tree
automaton that accepts Mealy machines. To do this, we must ensure that
$\mathcal{A(\phi)}$ accepts only input preserving models.
Hence, we construct a second alternating \cb tree automaton
$\mathcal{A'(\phi)} = (S', \Sigma_I, \Sigma_O', \start',
\delta', F, \inpu)$. 
We expand the state space so that $S' = \Sigma_I \times S$, and we will require
that each state remembers the last input. Therefore, we set the starting state
to be $\start' = (\inpu, \start)$. Let $(\sigma_I, s) \in S'$ be a state, and
let $(\sigma'_I, \sigma'_O)$ be a pair of input and output letters. We also
expand the output alphabet so that $\Sigma'_O = \Sigma_I \times \Sigma_I$. That
is, each output letter also contains an input letter. Then, for each state
$(\sigma_I, s) \in S'$, and each output letter $(\sigma'_I, \sigma'_O) \in
\Sigma'_O$, we define:
\begin{equation*}
\delta'((\sigma_I, s), (\sigma'_I, \sigma'_O)) = \begin{cases}
\emptyset & \text{if $\sigma_I \ne \sigma_I'$,} \\
\{(\sigma''_I, s''), \sigma''_I) \; : \; (s'', \sigma''_I) \in \delta(s, \sigma'_O) & \text{otherwise.}
\end{cases}
\end{equation*}
Having made this transformation, we now have that $\mathcal{L(A'(\phi))}$ is
the set of Mealy machines that model $\phi$.

We now show that $\mathcal{A'(\phi)}$ can be translated into a universal \cb
automaton $\mathcal{U(\phi)} = (S, \Sigma_I, \Sigma''_O , \start,
\delta_\mathcal{U)}, F, \inpu)$. Let $\mathcal{T} \in \mathcal{L(A(\phi))}$ be a
Mealy machine that is accepted by $\mathcal{A'(\phi)}$, and let $G = (V, E)$ be
the run graph of $\mathcal{T}$ on $\mathcal{A'(\phi)}$. Note that, by the
definition of run graphs, for each state $(q, t) \in V$ we use exactly one
satisfying assignment to generate the outgoing edges from $(q, t)$. The idea
behind this proof is to use the output symbols of $\mathcal{T}$ to store this
satisfying assignment.

Formally, we define the extended alphabet $\Sigma''_O := \Sigma'_O \times (S
\rightarrow 2^{S \times \Sigma_I})$. Each output letter of the Mealy machine
contains an output letter $\sigma'_O \in \Sigma_O$ of $\mathcal{A'(\phi)}$,
along with $|S|$ lists of satisfying assignments. Since $S$ has size polynomial
in $\phi$, and each element of $\Sigma_I$ has size in $O(|\phi|)$, each element
of $\Sigma'_O$ can be stored in space polynomial in $O(|\phi|)$.

Let $q \in S$ be a state of $\mathcal{A'(\phi)}$, let $\sigma_O \in \Sigma'_O$ be an
output letter, and let $\gamma : S \rightarrow 2^{S \times \Sigma_I}$. If
$\gamma(q)$ is a satisfying assignment of $\delta(q, \sigma_O)$, then we add the
transition $\delta_\mathcal{U}(q, (\sigma_O, \gamma)) = \gamma(s)$. Although the
function $\delta_\mathcal{U}$ may require exponential space to describe fully,
we can compute in polynomial time, for a given $q$, $\sigma_O$, and $\gamma$,
whether $\delta_\mathcal{U}(q, (\sigma_O, \gamma))$ is a transition. This is
because $\delta$ has polynomial size in $|\phi|$, and checking whether
$\gamma(s)$ is a satisfying assignment can easily be done in polynomial time.

Note that, for each Mealy machine $\mathcal{T} \in \mathcal{L(U(\phi))}$, we can
use labels of each state to argue that there must be a corresponding run graph
of $\mathcal{T} \restriction 2^{\Pi_O}$ on $\mathcal{A'(\phi)}$. On the other
hand, if $\mathcal{T}$ has a run graph on $\mathcal{A'(\phi)}$, then we can
easily use the satisfying assignments used in this run graph to construct a
Mealy machine $\mathcal{T'} \in \mathcal{L(U(\phi))}$ with $\mathcal{T'}
\restriction 2^{\Pi_O} = \mathcal{T}$. Therefore, we have that there exists a
$\mathcal{T}' \in \mathcal{L(A'(\phi))}$ if and only if there exists a
$\mathcal{T} \in \mathcal{L(U(\phi))}$, and that $\mathcal{T} \restriction
2^{\Pi_O}$ is a model of $\phi$ if, and only if, $\mathcal{T} \in
\mathcal{L(U(\phi))}$. This completes the proof of Lemma~\ref{lem:ctl2ucbta}.

\section{Proof of Lemma~\ref{lem:tm}}

\begin{proof}
Suppose that $\mathcal{M} = (S, \Sigma_I, \Sigma_T, \Sigma_O, \delta, \start, c,
\init, \inpu)$, and let $(S_I, S_O)$ be the partition of~$S$ into the input and
output states. We will assume that $\Sigma_I = 2^{\Pi_I}$, and that $\Sigma_O =
2^{\Pi_O}$, for some sets $\Pi_I$ and $\Pi_O$ of input and output propositions. 

We define a Mealy machine $\mathcal{T}(\mathcal{M}) = (S_\mathcal{T}, \Sigma_I,
\Sigma_O, \tau_\mathcal{T}, l_\mathcal{T}, \start_\mathcal{T}, \inpu)$ as
follows. The state space is defined as follows. Firstly we have the normal
states $S_N = S \times \Sigma_I \times \nats_{\le c} \times \Sigma_O \times
(\Sigma_T)^c$, where $\nats_{\le c} = 1, 2, \dots, c$, which represent the
computational states that the Turing machine can be in. In addition to these, we
also have a special state $\fail$, which will be used to indicate that the
online Turing machine runs forever without writing an output symbol. Therefore,
we have $S_\mathcal{T} = S_N \cup \{\fail\}$.


All states $a \in S_N$ are tuples of the form $(s, \sigma_I, j, \sigma_O, T =
\langle T_1, T_2, \dots, T_c \rangle)$, where $s$ is a state of the Turing
machine, $j$ is the current position of the storage tape head, $\sigma_I$ is the
symbol at the head of the input tape, $\sigma_O$ is the last symbol written to
the output tape, and~$T$ is the current state of the storage tape. Since we know
that $I_1 = \inpu$ and $O_1 = \emptyset$ in the initial state of the machine,
the starting state of Mealy machine will be $\start_\mathcal{T}
= (\start, \inpu, 1, \emptyset, \init)$. 

We now define $\tau_\mathcal{T}$ and $l_\mathcal{T}$. Firstly, we define
$\tau(\fail, \sigma_I) = \fail$, and we define $l_\mathcal{T}(\fail, \sigma_I) =
\emptyset$, for all $\sigma_I \in \Sigma_I$. Now we consider the states $a = (s,
\sigma_I, j, \sigma_O, T = \langle T_1, T_2, \dots, T_c \rangle)$ in the set
$S_N$. Note that the definition of an online Turing machine ensures that there
is always a blank symbol at the head of the output tape. Therefore, suppose
that:
\begin{equation*}
\delta(s, \sigma_I, T_j, \emptyset) = (s', (\sigma'_I, d_1), (\sigma'_T, d_2),
(\sigma'_O, d_3)).
\end{equation*}
If~$a$ has $d_1 = \rightarrow$, then we say that~$a$ is an~\emph{input
state}, and if~$a$ has $d_3 = \rightarrow$, then we say that~$a$ is an
\emph{output state}. In our reduction, we will only consider the input states,
and the starting $\start_\mathcal{T}$.
For all other states $s$ we define $\tau_\mathcal{T}(s, \sigma_I) = \fail$ for
all input letters $\sigma_I \in \Sigma_I$.

Before we define $\tau_\mathcal{T}$, we first define a helper function $\succs$.
This function will be used to find the transitions. Let $a = (s, \sigma_I, j,
\sigma_O, T = \langle T_1, T_2, \dots, T_c \rangle)$ be a normal state, and let
$T'$ be the tape $T$ with the $j$th symbol replaced with $\sigma'_T$. If $d_3 =
-$ then we define $\succs(a)$ to be 
\begin{equation*}
a' = (s', \sigma_I, \nex(j, d_2), \sigma_O, T').
\end{equation*} 
On the other hand, if
$d_3 = \rightarrow$ then we define $\succs(a)$ to be 
\begin{equation*}
a' = (s', \sigma_I, \nex(j, d_2), \sigma'_O, T').
\end{equation*} 
Note that this definition correctly remembers the last symbol that was written
to the output tape.

For each input state $a = (s, \sigma_I, j, \sigma_O, T = \langle T_1, T_2,
\dots, T_c \rangle)$ and each input letter $\sigma_I$, let $a'$ be $\succs(a)$,
where the input letter is replaced by $\Sigma_I$. We define $\pi(a, \sigma_I)$
be the path that starts at $a'$ and follows $\succs(a)$ until another input
state is reached. Note that this path may be infinite if the Turing machine runs
forever without requesting another input. If $\pi(a, \sigma_I)$ ends at a state
$a''$, and if $\sigma_O$ is the output letter at $a''$, then we define $\tau(a,
\sigma_I) = a''$, and we define $l_\mathcal{T}(a, \sigma_I) = \sigma_O$.

On the other hand, if $\pi(a, \sigma_I)$ is an infinite path, then we have two
cases to consider. If $\pi(a, \sigma_I)$ never visits an output state, then we
define $\tau_\mathcal{T}(a, \sigma_I) = \fail$, and we define $l_\mathcal{T}(a,
\sigma_I) = \emptyset$. On the other hand, if $\pi(a, \sigma_I)$ does visit an
output state $a''$, and if $\sigma_O$ is letter output at $a''$, then we define
$\tau_\mathcal{T}(a, \sigma_I) = \fail$, and we define $l_\mathcal{T}(a,
\sigma_I) = \sigma_O$.

To see that this reduction is correct, note that, in an online Turing machine,
exactly one output is written to the output tape for each input that is read
from the input tape. Therefore, for every state~$a$, our transition function
$l_\mathcal{T}(a, \sigma_I)$ correctly moves to a state $a' = (s, j, \sigma_O,
T)$, where $\sigma_O$ is the output given by the online Turing machine for the
input $\sigma_I$. Moreover, if the Turing machine runs for an infinite number of
steps while producing only a finite number of outputs, then $\mathcal{T}$ will
correctly produce an infinite sequence of blank symbols, and it also correctly
outputs the final output symbol. Therefore, we have that $\mathcal{T(M)}$ is a
model of~$\phi$ if and only if~$\mathcal{M}$ is a model of~$\phi$. \qed
\end{proof}

\section{Proof of Lemma \ref{lem:logspace}}

\begin{proof}
Given the online Turing machine $\mathcal{M} = (S, \Sigma_I, \Sigma_T, \Sigma_O,
\delta, \start, c, \init)$, our task is to output the Mealy machine
$\mathcal{T}(\mathcal{M}) = (S_\mathcal{T}, \Sigma_\mathcal{T},
\tau_\mathcal{T}, l_\mathcal{T}, \start_\mathcal{T}, \inpu)$. Recall that each
normal state $s \in S_N$ is a tuple of the form $(s, j, \sigma_O, T )$.
Obviously the parameters~$s$, $j$, and $\sigma_O$, can be stored in
$O(\mathcal{M})$ space. Moreover, since the description of $\mathcal{M}$
contains $\start$, which is a tape of length~$c$, the tape $T$ can also be
stored in $O(|\mathcal{M}|)$ space. Since the state space of $S_\mathcal{T}$
consists of $S_N$ and $\fail$, we have that each state of
$\mathcal{T}(\mathcal{M})$ can be stored in $O(|\mathcal{M}|)$ space.

Our algorithm for outputting $\mathcal{T(M)}$ is as follows. We begin by
outputting $\start_\mathcal{T}$. Then we output the state $\fail$ along with the
outgoing transitions and labels from $\fail$. We then cycle through each normal
state $s \in S_N$, and output $s$ All of these operations can obviously be done
in $O(|\mathcal{M}|)$ space.

Finally, we must argue, for each normal state $s \in S_N$, that the outgoing
transitions and labels from $s$ can be computed in $O(|\mathcal{M}|)$ space. Let
$\sigma_I \in \Sigma_I$ be an input letter. Our algorithm iteratively follows
the function $\succs(s, \sigma_I)$ until we find an input state. If, while
iterating $\succs(s, \sigma_I)$, we encounter an output state $s'$, then we
remember the letter $\sigma_O$ that was outputted. If we find an input state
$s'$, then we output $\tau_\mathcal{T}(s, \sigma_I) = s'$ and $l_\mathcal{T}(s,
\sigma_I) = \sigma_O$.

On the other hand, we may never find an
input state. Therefore, we also maintain a counter, which counts the number of
times that $\succs$ has been followed. If the counter reaches $|S_N|$, then we
know that the online Turing machine must run forever without reading its next
input. In this case, we output $\tau_\mathcal{T}(s, \sigma_I) = \fail$. If
an output letter $\sigma_O$ has been remembered, then we output
$l_\mathcal{T}(s, \sigma_I) = \sigma_O$, otherwise we output $l_\mathcal{T}(s,
\sigma_I) = \emptyset$.

To implement this procedure, we must remember at most $2$ states, one for $s$,
and one for the current state. We also remember at most one output letter. We
must also maintain a counter that uses $\log(|S_N|)$ bits, and $|S_N| \in
2^{O(n)}$. Therefore, this procedure can be implemented in $O(|\mathcal{M}|)$
space. \qed
\end{proof}

\section{Proof of Theorem~\ref{thm:easy}}

\begin{proof}
We show that, under the assumption that there is always a model of size $c$, the
CTL synthesis problem can be solved in polynomial space. Since
Theorem~\ref{thm:realctl} implies that CTL synthesis is EXPTIME-complete, we
will therefore prove that PSPACE = EXPTIME.

The algorithm is as follows. We first non-deterministically guess an online
Turing machine $\mathcal{M}$ with with $|\mathcal{M}| \in
O(\text{poly}(|\phi|))$. Then we model check against the input formula $\phi$,
using the Turing machine given by Lemma~\ref{lem:logspace} and the Turing
machine given by Theorem~\ref{thm:ctllogspace}. Since the output of the first
Turing machine has size $2^{O(|\mathcal{M}|)}$, we have that the second Turing
machine uses $O(|\mathcal{M}|)$ space. Using standard techniques to compose
space bounded Turing machines (see~\cite[Proposition 8.2]{Pap94}, for example),
we obtain a Turing machine that solves the CTL synthesis problem in NPSPACE
= PSPACE. \qed
\end{proof}

\section{Proof of Lemma~\ref{lem:empty}}

\begin{proof}
From the formula $\phi$ we can construct $\mathcal{F}^s(\phi) = (S, \Sigma_I,
\Sigma_O , s, \delta, F)$ in exponential time by doing the following: first we
loop through each possible state in $S$ and output it. Since
Lemma~\ref{lem:ucbta2dsta} guarantees that each state can be stored in space
polynomial in~$\phi$, this procedure can take at most exponential time
in~$\phi$. Then, we loop through each $(s, \sigma_O) \in S \times \Sigma_O$ and
output the transition $\delta(s, \sigma_O)$. Again, since
Lemma~\ref{lem:ucbta2dsta} implies that each member of $S \times \Sigma_O$ can
be written in polynomial space, and therefore this procedure takes at most
exponential time.

So far we have shown that the states and transitions of $\mathcal{F}^s(\phi)$
can be constructed in exponential time, while using exponential space. We call a
state $s' \in S$ \emph{rejecting} if $\delta(s', \sigma_O) = \emptyset$ for all
output letters $\sigma_O \in \Sigma_O$. It is not difficult to see that 
$\mathcal{L}(\mathcal{F}^s(\phi)) = \emptyset$ if, and only if, all possible
paths from~$s$ lead to a rejecting state. Thus, we can solve the emptiness
problem by solving a simple reachability query on the automaton that we have
constructed. Since reachability can be solved in polynomial time, and the
description of our automaton uses exponential space, this reachability query can
be answered in exponential time. \qed
\end{proof}

\section{Proof of Lemma~\ref{lem:his}}

\begin{proof}
We prove this fact by reduction from the halting problem for an alternating
polynomial space Turing machine, which we will denote as \texttt{APSPACE-HALT}.
This is an APSPACE-complete problem. There are two inputs to
\texttt{APSPACE-HALT}:
\begin{itemize}
\item a polynomially-space bounded alternating Turing machine $\mathcal{T}$, and
\item an input word $I_\mathcal{T}$ for $\mathcal{T}$.
\end{itemize}
Since $\mathcal{T}$ is polynomially space bounded, we can assume that it comes
equipped with a function $s : \nats \rightarrow \nats$, where $s(i)$ gives the
total amount of space used for an input of length $i$. Since computing $s$ can
be done by evaluating a polynomial, we know that $s(i)$ can be computed in
polynomial time.

We now provide a polynomial time Turing-reduction to \texttt{HALT-IN-SPACE}.
Since $\mathcal{U}$ is a universal Turing machine, there must exist an input~$I$
for $\mathcal{U}$ such that $\mathcal{U}$ accepts~$I$ if and only if
$\mathcal{T}$ accepts $I_\mathcal{T}$. We fix~$I$ for the rest of the proof.
Since $\mathcal{U}$ uses polynomially more space than the machine that it
simulates, we can assume that it comes equipped with a function $t : \nats
\rightarrow \nats$, where $t(i)$ gives the amount of space used by $\mathcal{U}$
while simulating a machine that uses~$i$ space. Again, since computing $t(i)$
can be done by evaluating a polynomial, we have that $t(i)$ can be computed in
polynomial time. Therefore, to complete our reduction, we construct $B$ to be a
sequence of blanks of length $t(s(|I|))$, and then solve \texttt{HALT-IN-SPACE}
for $I$ and $B$. \qed
\end{proof}

\section{LTL and Automata}

\subsection{LTL Formulas}

Given a finite set $\Pi$ of atomic propositions, the syntax of an LTL formula is
defined as follows:
\begin{equation*}
\phi ::= p \; | \; \lnot \phi \; | \; \phi \lor \phi \; | \; \x \phi \; | \;
\phi \until  \phi,
\end{equation*}
where $p \in \Pi$. For each LTL formula, we define $|\phi|$ to give the size of
the formula, which is the size of the parse tree for that formula.

Let $\sigma = \sigma_0, \sigma_1, \do ts$ be an infinite word where each symbol
$\sigma_i \in \Pi$. For each $i \in \nats$, we define the semantics of an LTL
formula $\phi$ as follows:
\begin{itemize}
\item $\sigma, i \models p$ if and only if $p \in \sigma_i$.
\item $\sigma, i \models \lnot \phi$ if and only if $\sigma, i \not\models
\phi$.
\item $\sigma, i \models \phi \lor \psi$ if and only if either $\sigma, i
\models \phi$ or $\sigma, i \models \psi$.
\item $\sigma, i \models \x \phi$ if and only if $\sigma, i+1 \models \phi$.
\item $\sigma, i \models \phi \until \psi$ if and only if there exists $n
\ge i$ such that $\sigma, n \models \psi$ and for all $j$ in the range $i \le j
< n$ we have $\sigma, j \models \phi$.
\end{itemize}
A word $\sigma$ is a model of an LTL formula $\phi$ if and only if $\sigma, 0
\models \phi$. If $\sigma$ is a model of $\phi$, then we write $\sigma \models
\phi$.

Let~$\phi$ be an LTL formula that uses $\Pi_I \cup \Pi_O$ as a set of atomic
propositions. We say that a Mealy machine is a model of~$\phi$ if $\sigma(\pi)
\models \phi$ for every infinite path $\pi$ that begins at the starting state.
Given an LTL formula~$\phi$ and a Mealy machine $\mathcal{T}$, the LTL model
checking problem is to decide whether $\mathcal{T}$ is a model of $\phi$. 

\begin{theorem}[\cite{vardi1986automata}]
Given a Mealy machine $\mathcal{T}$ and an LTL-formula $\phi$,
the LTL model checking problem can be solved in $O((\log |\mathcal{T}| +
|\phi|)^2)$ space.
\end{theorem}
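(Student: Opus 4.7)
The plan is to use the standard automata-theoretic approach of Vardi and Wolper. First, I would translate the negation $\neg\phi$ into a nondeterministic Büchi word automaton $\mathcal{A}_{\neg\phi}$ whose language is exactly the set of infinite words violating $\phi$. The classical tableau construction produces such an automaton with $2^{O(|\phi|)}$ states, and crucially each state is a subset of the closure of $\phi$, so it can be represented in $O(|\phi|)$ bits and its transition relation can be decided locally in polynomial time from $|\phi|$. Thus there is no need to materialise $\mathcal{A}_{\neg\phi}$ explicitly: states and transitions can be generated on the fly from $\phi$ using only $O(|\phi|)$ workspace.

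Next, I would form the (implicit) synchronous product $\mathcal{T} \times \mathcal{A}_{\neg\phi}$ whose runs correspond to computations of $\mathcal{T}$ together with a guessed accepting run of $\mathcal{A}_{\neg\phi}$ on the labels $\sigma(\pi)$. A state of the product is a pair $(s, q)$ with $s$ a state of $\mathcal{T}$ and $q$ a state of $\mathcal{A}_{\neg\phi}$; such a pair is representable in $O(\log|\mathcal{T}| + |\phi|)$ bits, and the product transition relation can be checked in space polynomial in this representation (looking up a transition in $\mathcal{T}$ and one local step of the tableau). The machine $\mathcal{T}$ violates $\phi$ on some path iff the product has a reachable ``lasso'': an accepting product state reachable from the initial one and reachable from itself.

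Existence of such a lasso is a standard $\mathsf{NL}$ question on the product graph: nondeterministically guess an accepting product state $(s^*,q^*)$, verify it is reachable from $(\start_\mathcal{T},\start_{\mathcal{A}})$ by guessing a path one node at a time, then verify $(s^*,q^*)$ is reachable from itself by the same technique. The whole procedure stores only a constant number of product states plus two length counters bounded by the product size, i.e.\ $O(\log|\mathcal{T}|+|\phi|)$ bits of workspace nondeterministically. Applying Savitch's theorem converts this nondeterministic $O(\log|\mathcal{T}|+|\phi|)$-space procedure into a deterministic $O\bigl((\log|\mathcal{T}|+|\phi|)^2\bigr)$-space procedure, and complementing the answer (``no bad lasso exists'') yields the required deterministic space bound for the model-checking problem itself.

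The only subtle point is keeping the $\neg\phi$ Büchi automaton implicit throughout: one must be careful to recompute tableau states and transitions from $\phi$ whenever they are needed rather than ever writing the full $2^{O(|\phi|)}$-state automaton on the tape. Since each Fischer–Ladner subformula test and each local tableau consistency check is decidable in space polynomial in $|\phi|$, this bookkeeping is routine and does not blow up the space bound, so the desired $O\bigl((\log|\mathcal{T}|+|\phi|)^2\bigr)$ bound follows.
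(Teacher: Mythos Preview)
Your proposal is correct and is precisely the classical Vardi--Wolper automata-theoretic argument. Note that the paper does not give its own proof of this theorem at all: it is stated as a cited result from \cite{vardi1986automata}, so there is nothing to compare against beyond the original reference, which your sketch faithfully reconstructs.
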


The LTL synthesis problem is defined in the same way as the CTL synthesis
problem: given an LTL formula $\phi$, we must decide whether there exists a
Mealy machine that is a model of $\phi$.

\begin{theorem}[\cite{PR89}]
\label{thm:realltl}
The LTL synthesis problem is 2EXPTIME-complete.
\end{theorem}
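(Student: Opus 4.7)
The plan is to give the two classical directions of Pnueli and Rosner's theorem: a doubly-exponential time realizability algorithm, and a matching hardness reduction from a 2EXPTIME-complete problem.

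For the upper bound, I would start by translating the LTL specification~$\phi$ into a nondeterministic B\"uchi word automaton $\mathcal{N}_\phi$ of size $2^{O(|\phi|)}$ using the standard tableau construction. Then I would determinize~$\mathcal{N}_\phi$ via Safra's construction into a deterministic Rabin (or parity) word automaton $\mathcal{D}_\phi$ of size $2^{2^{O(|\phi|)}}$ with polynomially many acceptance pairs in $|\mathcal{D}_\phi|$. Viewing $\mathcal{D}_\phi$ as a Rabin tree automaton that insists that every branch of the input tree is accepted gives an automaton whose non-emptiness characterises exactly those trees/Mealy machines that model~$\phi$. Checking non-emptiness of a Rabin tree automaton (equivalently, solving the associated Rabin tree-automaton emptiness game) can be done in time polynomial in the state space and exponential in the number of Rabin pairs, which keeps us within a doubly exponential bound in~$|\phi|$; if desired, a parity conversion and standard parity-game solvers give the same bound. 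Combining these steps yields a $2^{2^{O(|\phi|)}}$-time algorithm for LTL realizability, and moreover produces a Mealy machine of doubly exponential size witnessing realizability.

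For the lower bound, I would reduce from a canonical 2EXPTIME-complete problem, namely the acceptance problem for alternating Turing machines that use exponential space (since $\mathrm{AEXPSPACE}=\mathrm{2EXPTIME}$). Given such a machine $\mathcal{A}$ with tape bound $2^{n}$ on input~$w$, I would design an LTL formula $\phi_{\mathcal{A},w}$ of size polynomial in $|\mathcal{A}|+|w|+n$ whose models are precisely the strategy trees that witness acceptance. Configurations will be encoded as blocks of length $2^{n}$ of propositional letters: a block of $n$ propositions holds an $n$-bit cell-address counter together with propositions describing the current tape symbol at that cell, the state (stored in the block of the cell scanned by the head), and an environment choice resolving universal/existential branching. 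Using $\x$ and $\until$ I can then enforce, in polynomial size, that (i)~the first block encodes the initial configuration, (ii)~the counter increments correctly, so one complete configuration follows another, (iii)~between successive configurations every cell not under the head is preserved (expressed by a formula of the form "if the current address equals the address $2^{n}$ steps later, then the cell content is preserved unless the head was here"), and (iv)~the head, state, and input bits update according to the transition relation of $\mathcal{A}$, with universal branching delegated to the environment input.

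The main obstacle, as is traditional in this theorem, is the lower-bound encoding: the fact that LTL is a linear-time logic forces us to talk about the $2^{n}$-th successor state using only polynomial-size formulas, and cell-preservation must be stated without an explicit quantifier over the cell index. The standard trick is to write the key invariant once as $G\bigl((\text{address}=x)\rightarrow(\x^{2^{n}}(\text{address}=x\wedge\text{content preserved})\vee\text{head is here})\bigr)$, then eliminate the implicit universal quantification over the $n$-bit address~$x$ by quantifying over it via the input alphabet (the environment supplies an arbitrary address to be checked) and re-expressing bit equality with $n$ conjunctions. Once the formula is constructed, correctness (models of $\phi_{\mathcal{A},w}$ correspond exactly to accepting strategy trees) and the polynomial size bound are routine; together with the upper bound, this yields 2EXPTIME-completeness.
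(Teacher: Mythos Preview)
The paper does not give its own proof of this theorem; it is simply cited from Pnueli and Rosner and used as a black box, so there is nothing in the paper to compare your argument against.

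Your sketch is the standard route and the upper bound is fine. For the lower bound, however, there is a real gap in what you wrote: your cell-preservation invariant uses $\x^{2^{n}}$, which is a subformula of size $2^{n}$ and therefore destroys the polynomial bound you need on $|\phi_{\mathcal{A},w}|$. The environment-address trick you invoke correctly disposes of the implicit universal quantifier over the cell index~$x$, but it does nothing about the exponential nesting of $\x$. The usual remedy is to exploit the $n$-bit address counter you have already built into each cell block: instead of $\x^{2^{n}}(\text{address}=x\wedge\ldots)$, use an $\until$-formula that waits for the \emph{next} position at which the counter again equals the environment-supplied address, and check content preservation there. Since equality of two $n$-bit addresses is an $O(n)$-size Boolean combination, this yields a polynomial-size formula, and with that correction the reduction goes through.
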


\subsection{Proofs for Theorem \ref{theo:ucb2}}

\begin{lemma}
Let $\mathcal U$ be a realisable universal safety word automaton.
If there always exists an online Turing machine $\mathcal{M}$ that realises $\mathcal U$ while taking only $O(\text{poly}(|\mathcal U|))$ time between reading two input letters, then EXPTIME $\subset$ P/poly.
\end{lemma}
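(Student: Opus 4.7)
The plan is to replicate the proof structure of Section~\ref{sec:easyppoly}, but starting from a universal safety word automaton $\mathcal{U}_b$ in place of the CTL formula $\phi_b$. First, I would construct $\mathcal{U}_b$ of size polynomial in $b$ whose accepted Mealy machines are exactly those that faithfully simulate an alternating Turing machine $\mathcal{A}_b$ on a tape of length~$b$ while correctly predicting a halting bit~$h$. Second, given a small and fast model $\mathcal{M}$ of $\mathcal{U}_b$, I read $h$ off the first output to decide halting for $\mathcal{A}_b$ in polynomial time. Third, using the P/poly advice function $f(b) := \mathcal{T}_b$ together with Lemma~\ref{lem:his}, this places the EXPTIME-hard \texttt{HALT-IN-SPACE} problem in P/poly.

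The construction of $\mathcal{U}_b$ is the only genuinely new step. As in the CTL proof, the input propositions encode the initial tape, and the output propositions encode the current configuration of $\mathcal{A}_b$, a simulation counter~$c$, and the predicted halting bit~$h$. All of the purely local constraints---correct transition between configurations, monotone counter, $h=\mathit{true}$ whenever $q$ is accepting, $h=\mathit{false}$ when $c$ saturates, and the universal direction of the $h$-biconditionals ($A\x h$ for universal states with $h=\mathit{true}$, and $A\x \lnot h$ for existential states with $h=\mathit{false}$)---are safety properties that are easily enforced by a polynomial-sized universal safety word automaton running along every path of the Mealy machine. The remaining requirements are the existential ones: if $q$ is existential with $h=\mathit{true}$, some environment input must lead to $h=\mathit{true}$ in the next step, and symmetrically if $q$ is universal with $h=\mathit{false}$.

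To encode these existential requirements inside a safety \emph{word} automaton, I would augment the output alphabet with a \emph{witness input} $\sigma_{\mathit{wit}} \in 2^{\Pi_I}$. The Mealy machine is required to commit to a $\sigma_{\mathit{wit}}$ at every step, meant as the successor certifying the current existential claim. The automaton then runs an extra local check along every path: whenever the actual next input equals the previously output $\sigma_{\mathit{wit}}$, the corresponding next-step $h$-value must match the claim. Since this check is triggered only on the single path that ``plays along'' with the witness, it is a safety property and fits the universal safety framework. A straightforward inductive argument on the tree of runs then shows that $\mathcal{U}_b$ accepts a Mealy machine if, and only if, that machine correctly simulates $\mathcal{A}_b$ and predicts~$h$, which is the direct analogue of Lemma~\ref{lem:phib}.

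With $\mathcal{U}_b$ in hand, the remainder of the argument is identical to Section~\ref{sec:easyppoly}: from a small and fast model of $\mathcal{U}_b$ we extract a polynomial-time deterministic Turing machine $\mathcal{T}_b$ that decides halting for $\mathcal{A}_b$, we encode $\mathcal{T}_b$ as the advice string $f(b)$, and the reduction from \texttt{HALT-IN-SPACE} to running this advice-provided machine on a simulated input places the problem in P/poly. The main obstacle is precisely the witness-input encoding: verifying that a purely universal safety word automaton can be coaxed into forcing existential branching witnesses without blowing up beyond polynomial size. Once that is done, everything else transfers verbatim from the CTL argument.
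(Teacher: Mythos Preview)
Your proposal is correct and follows essentially the same approach as the paper: the paper likewise augments the output alphabet with atomic propositions encoding a \emph{witness successor} (your $\sigma_{\mathit{wit}}$) and checks, along the path where the environment actually plays that input, that the claimed $h$-value propagates, handling both the existential-$h$-true and universal-$h$-false cases exactly as you describe. The remainder of the argument---extracting a polynomial-time decider from the small fast model and plugging it into the \texttt{HALT-IN-SPACE} advice construction---is indeed carried over verbatim from Section~\ref{sec:easyppoly}.
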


\begin{proof}
We use a reduction from the halting problem of a universal space bounded alternating Turing machine.

We first define the set of output propositions $\Pi_O$ that will be used by our universal \cb automaton.
Our intention is again that each letter $\sigma_O \in \Sigma_O =
2^{\Pi_O}$ should encode a configuration of an alternating Turing machine with a
storage tape of length $b$. This the storage tape itself can be represented
using $b' = b \cdot \log_2 |\Sigma_T|$ atomic propositions $p_1,\cdots,p_{b'}$.
We also use $b$ atomic propositions $t_1,\ldots,t_b$ to encode the position of
the tape head: the propositions $t_i$ is true if and only if the tape head is at
position $i$ of the tape. We use $l = \log_2(|Q|)$ atomic propositions,
where~$Q$ is the set of states in our Turing machine, to encode $q$, which is
the current state in the configuration. We also use $l+b'+\log_2|b|$ atomic
propositions to encode a counter $c$, which will count the number of steps that
have been executed. Finally, and most importantly, we include one
proposition~$h$, and we will require that~$h$ accurately predicts whether the
alternating Turing machine will eventually halt from the current configuration.
Different to the reduction from CTL, we also have to include a way to resolve existential choices in the model.
We therefore also include atomic propositions that refer to the directions that serve as witnesses for the fact that $h$ is \emph{true} for existential states or \emph{false} for universal states.
We refer to this successor as the \emph{witness successor}.

We now specify the universal safety automaton $\mathcal U$.
Since $\mathcal A$ is an alternating Turing machine, the transition
function between configurations is not deterministic. Instead, in each step
there is either a universal or an existential choice that must be made. We
will allow the environment to resolve these decisions. Since $\Sigma_I$ contains
enough letters to encode every possible configuration of $\mathcal{A}$, there
are obviously more than enough letters in $\Sigma_I$ to perform this task.

To check the correctness of these transitions, the universal safety automaton would, for each transition, have a corresponding input letter.
It would send, for each cell of the tape of $\mathcal A$, for the finite control of $\mathcal A$, and for the position the read/write head should be in after the transition, a state to all successors.
This state would not only contain this first input symbol will be interpreted as the initial state of our alternating Turing machine
$\mathcal{A}$.
While it sends the obligations to all successors, they are only interpreted on the single successor where the input read (and hence represented in the label) is $\sigma$.
(Reading a different input leads to immediate acceptance.)

Once
the existential and universal decisions of $\mathcal A$ have been resolved, the formula requires the
model to output the next configuration of the alternating Turing machine. In
other words, the environment will pick a specific branch of the computation of
$\mathcal{A}$, and therefore all Mealy machines in the language of $\mathcal U$ must be capable of producing
all possible computation branches of $\mathcal{A}$. It is not difficult to
produce a universal safety automaton that encodes these requirements.

However, we have one final requirement that must be enforced: that, in the first step of the computation, the
proposition $h$ correctly predicts whether the current configuration eventually
halts. This can be achieved  by adding the following requirements to our universal safety automaton.
\begin{itemize}
\item If $q$ is an accepting state, then $h$ must be true.
\item If $q$ is non-accepting and $c$ has reached its maximum value, then $h$ must be false.
\item If $q$ is non-accepting and $c$ has not reached its maximum value then:
\begin{itemize}
\item If $q$ is an existential state and $h$ is true, then $h$ must be true for the witness successor.
\item If $q$ is an existential state and $h$ is false, then $h$ must be false for all successors.
\item If $q$ is an existential state and $h$ is true, then $h$ must be true for all successors.
\item If $q$ is an universal state and $h$ is false, then $h$ must be false for the witness successor.
\end{itemize}
\end{itemize}

Therefore, we have constructed a universal safety word automaton $\mathcal U$ such that, for every model
of $\mathcal U$, the first output from the model must solve the halting problem for a
$b$-space bounded alternating Turing machine. \qed 
\end{proof}

\begin{theorem}
\label{thm:easyppoly}
Let $\mathcal U$ be a realisable universal safety word automaton.
If there always exists an online Turing machine $\mathcal{M}$, with $|\mathcal{M}| \in O(\text{poly}(|\mathcal U|))$, that is accepted by $\mathcal U$, then PSPACE = EXPTIME.
\end{theorem}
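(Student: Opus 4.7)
The plan is to mirror the proof of Theorem~\ref{thm:easy}, now working with a universal safety word automaton $\mathcal U$ in place of a CTL formula. I would combine two ingredients: (i) the realizability problem for universal safety word automata is EXPTIME-hard, inherited from the construction in the preceding lemma of this section, and (ii) under the stated hypothesis, realizability lies in NPSPACE and hence, by Savitch's theorem, in PSPACE. Together these give PSPACE $=$ EXPTIME.

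For the PSPACE upper bound I would construct a non-deterministic polynomial-space procedure: given $\mathcal U$, guess an online Turing machine $\mathcal M$ of description size polynomial in $|\mathcal U|$ (by hypothesis, such an $\mathcal M$ exists iff $\mathcal U$ is realizable), and verify that $\mathcal M$ is accepted by $\mathcal U$. The verification composes two space-bounded machines in the standard way (see \cite[Proposition 8.2]{Pap94}): first the machine of Lemma~\ref{lem:logspace}, which unravels $\mathcal M$ into the exponentially-sized Mealy machine $\mathcal T(\mathcal M)$ using only $O(|\mathcal M|)$ space, and then the universal \cb word automaton model-checking procedure cited in the excerpt, whose resource usage is polylogarithmic in $|\mathcal T(\mathcal M)|$ and $|\mathcal U|$. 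Since $\log|\mathcal T(\mathcal M)|$ is polynomial in $|\mathcal M|$ and $|\mathcal M|$ is polynomial in $|\mathcal U|$, the whole procedure runs in polynomial space in $|\mathcal U|$.

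For the hardness side I would simply invoke the construction from the preceding lemma of this section, which reduces the halting problem for polynomially-space-bounded universal alternating Turing machines to realizability of a universal safety word automaton. Since APSPACE $=$ EXPTIME, this shows realizability of universal safety word automata is EXPTIME-hard, and combined with the PSPACE upper bound yields PSPACE $=$ EXPTIME.

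The main obstacle I anticipate is a bookkeeping subtlety rather than a conceptual one: the bound cited in the excerpt for model-checking Mealy machines against universal \cb automata is written as a \emph{time} bound, whereas the composition argument requires reading it as a \emph{space} bound. One must therefore justify that model checking in this setting can indeed be carried out in space polylogarithmic in the combined input size --- a standard fact of the Vardi--Wolper tradition, but worth stating explicitly. Once this is in hand, the rest of the proof is a direct translation of the CTL argument of Theorem~\ref{thm:easy}.
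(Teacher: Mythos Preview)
Your proposal is correct and follows essentially the same approach as the paper: invoke EXPTIME-hardness from the preceding lemma, then obtain a PSPACE upper bound by guessing a polynomial-size online Turing machine and model-checking it against $\mathcal U$ via the composition of Lemma~\ref{lem:logspace} with a polylogarithmic-space model-checking procedure. The subtlety you anticipated is exactly the one the paper addresses: it confirms that model checking $\mathcal M$ against $\mathcal U$ can be done in \emph{space} $O\big((\log|\mathcal U|+\log|\mathcal{T(M)}|)^2\big)$, justified by reading $\mathcal U$ as the complement of a nondeterministic reachability automaton and reducing (via \cite[Theorem~3.2]{vardi1986automata}) to the emptiness problem for nondeterministic B\"uchi word automata~\cite{DBLP:journals/tcs/SistlaVW87}.
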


\begin{proof}
A proof that the synthesis problem for these automata is EXPTIME complete is contained in the proof of the previous lemma.

Model checking if $\mathcal M$ is accepted by $\mathcal U$ can be done in space $O\big((\log |\mathcal U|+ \log |\mathcal{T(M)}|)^2\big)$, using the reduction from \cite{vardi1986automata} Theorem 3.2
(note that the language of $\mathcal U$ is the complement of the language of the same automaton read as a nondeterministic reachability automaton, where blocking translates to immediate acceptance and vice versa) to the emptiness problem of nondeterministic B\"uchi word automata \cite{DBLP:journals/tcs/SistlaVW87}.
\qed
\end{proof}

\end{document}